\documentclass[letterpaper, 12 pt]{article}  % Comment this line
\usepackage{amssymb}
\usepackage{amsmath,mathrsfs}
\usepackage{amsthm}
\usepackage{psfrag}
\usepackage{epsfig,fullpage}
\usepackage{color}

\title{
On Multisequences and their Extensions  
}

\author{Srinivasan Krishnaswamy*, H. K. Pillai}
\newtheorem{theorem}{Theorem}[section]
\newtheorem{corollary}[theorem]{Corollary}
\newtheorem{example}[theorem]{Example}
\newtheorem{remark}[theorem]{Remark}

\newtheorem{lemma}[theorem]{Lemma}
\newtheorem{definition}[theorem]{Definition}
\newtheorem{algorithm}[theorem]{Algorithm}
\newtheorem{problem}{Problem}

\newtheorem{subroutine}[theorem]{Subroutine}
%%%%%%%%%%%%%%%%%%%
\newcommand{\F}{\mathbb{F}}
\linespread{1.3}
\begin{document}
\maketitle

%%%%%%%%%%%%%%%%%%%%%%%%%%%%%%%%%%%%%%%%%%%%%%%%%%%%%%%%%%%%%%%%%%%%%%%%%%%%%%%%
\begin{abstract}
In this paper we deal with the dimension of multisequences and related
properties. For a given multisequence $W$ and $R \in \mathbb{Z}_+$, we define
the $R-$extension of $W$. Further we count the number of multisequences $W$
whose $R-$extensions have maximum dimension and give an algorithm to derive
such multisequences. We then go on to use this theory to count the number of
Linear Feedback Shift Register(LFSR) configurations with multi input multi
output delay blocks for any given primitive characteristic polynomial and
also to design such LFSRs. Further, we use the result on multisequences to count the number of Hankel matrices of
any given dimension. 
\end{abstract}

%%%%%%%%%%%%%%%%%%%%%%%%%%%%%%%%%%%%%%%%%%%%%%%%%%%%%%%%%%%%%%%%%%%%%%%%%%%%%%%%
\section{Introduction}
%  LFSRs with primitive
% characteristic polynomials are of particular interest since they generate
% sequences with desirable randomness properties like  $2$-level
% autocorrelation property and span-$n$ property (all nonzero subsequences of
% length $n$ occur once in every period)\cite{Golomb}. LFSRs are implemented
% by means of unit delay blocks, feedback gain blocks and adders as shown in
% figure \ref{f3}. In this paper, we enumerate the number of
% feedback configurations for a given primitive polynomial. As a byproduct of
% the
% counting methodology used, we give an algorithm for deriving such a feedback
% configuration.
Linear recurring sequences, over a finite field $\F_q$, with maximum period have been shown to 
exhibit several important randomness properties such as $2$-level
autocorrelation property and span-$n$ property (all nonzero subsequences of
length $n$ occur once in every period)\cite{Golomb}. As a result they  
find applications in a wide array of areas including  cryptography 
\cite{Schneier}, error correcting
codes  \cite{peterson} and spread spectrum communication \cite{pickholtz}. %These sequences are generated by means of electronic circuits known as Linear Feedback Shift Registers (LFSRs). Such LFSRs give only one bit per step and are hence slow.
%Besides they use multiple bit operations and are hence difficult
%to implement in hardware. 

An obvious extension of a sequence of scalars is a sequence of vectors over the
given finite field. Such a sequence of vectors is known as a multisequence.
Periodic multisequences and linear relations among the elements of such
sequences,  have been a subject of study for a considerable period of time
\cite{Daykin}, \cite{Mullen}, \cite{Yucas}. Generating multisequences with a
given minimal polynomial has been an important problem motivating papers like
\cite{Ecuyer}, \cite{Neider1} and \cite{Neider2}.  

In this paper, we start by deriving some basic theorems regarding
multisequences. We then introduce the concept of an $R-$extension of a
multisequence, for $R \in \mathbb{Z}_+^m$. We then derive an
algorithm to generate multisequences whose $R-$extensions have maximum dimension. Further
we derive a formula for the number of multisequences having this property.
As an application, we show that the problem of generating some special class of LFSR configurations for any given primitive characteristic polynomial is a special case of the above problem. We then go on to count the number of such LFSR configurations using the
formula derived for multisequences. Finally we demonstrate another application of the theory developed : a novel way to count the number of full rank Hankel matrices with entries from a given finite field. 

In the remainder of this paper, $\F_q$ denotes a field of cardinality $q$, where
$q$ is a prime power. $\F_q[s]$ denotes the ring of polynomials in $s$ with
coefficients from $\F_q$. The group of all full rank $n \times n$ matrices with
entries from $\F_q$ is denoted by $GL(n,\F_q)$. The cardinality of any set
$K$ is given by $|K|$. The set of positive integers is denoted by
$\mathbb{Z}_+$.
For some integer $i$, we denote the vector in
$\F_q^n$, with $1$ in the $i^{th}$ position and $0$ in the remaining positions,
by $e_i^n$. For any matrix $M$, we denote the submatrix of $M$, where
 the row indices run from $a$ to $b$ and the column indices run from $c$ to
 $d$, by $M(a:b,c:d)$. We denote the $a$-th row of a matrix by $M(a,:)$. A matrix with row vectors $v_1,v_2,
\ldots, v_n$ is represented as $[v_1;v_2;\ldots;v_n]$. The column span of a
matrix $M$ is
denoted by $colspan(M)$. 
\section{Multisequences }
\label{Multisequences}
We define a sequence $S$ in $\F_q$ as map from $\mathbb{Z}$ to $\F_q$. 
A sequence $S =\{S(k)\}_{k \in \mathbb{Z}}$, in a finite field $\F_q$
is called periodic if there exists an integer $r$ such that $S(k+r) = S(k)$ for
all $k$. 
The smallest such nonnegative $r$ is called the period of the sequence. There
are linear relations amongst the elements of a periodic sequence. One obvious
example of such a relation being $S(k+r) = S(k)$. A general form of such a
relation is
{\small
\begin{equation}
\label{eq1}
S(k+n) = a_{n-1}S(k+n-1) + a_{n-2}S(k+n-2) + \cdots + a_0S(k) ~~ \forall k
\textrm{ where
}
a_i \in \F_q.
\end{equation}
}
These are called  Linear Recurring Relations (LRRs). The integer $n$ in equation
\eqref{eq1} is called the
order of the LRR. Given an LRR we can uniquely associate a monic polynomial
with it. For example the polynomial associated with the LRR in equation
\eqref{eq1} is $p(s) =  s^n - a_{n-1}s^{n-1} - a_{n-2}s^{n-2} - \cdots - a_0 $.
Since we are dealing with periodic sequences, without loss of generality, we can
assume that $a_0 \neq 0$  \cite[Theorem 6.11]{lidl}.

It is easy to check that all polynomials associated with LRRs of a given
sequence $S$, form an ideal $\mathfrak{I}_S$ in the polynomial ring $\F_q[s]$.
Since $\F_q[s]$ is a principal ideal domain, every ideal has a unique monic
generating polynomial. The generating polynomial of $\mathfrak{I}_S$ is called
the minimal polynomial of the sequence $S$.
% Given an sLRR as in equation \eqref{eq1}, the polynomial 
% $p(x) =  x^n - a_{n-1}x^{n-1} - a_{n-2}x^{n-2} - \cdots - a_0 $ is called the
% characteristic polynomial of the sLRR. For a given periodic sequence, the
% characteristic
% polynomial of the minimum degree sLRR associated with it is called the {\bf
% minimum polynomial} of the sequence. 
The degree of the minimal polynomial is called the linear complexity of the
sequence. 

Given an LRR of degree $n$, there are many sequences that satisfy this relation.
In fact, the collection of all sequences that satisfy this relation form a
vector space over $\F_q$. The maximum possible period of sequences in this
vector space is equal to the order of the polynomial associated with the LRR. In
particular, if the polynomial associated with 
the LRR is a primitive polynomial of degree $n$, then every nonzero sequence in
the corresponding vector space has a period equal to $q^n - 1$ ( \cite[Theorem
6.33]{lidl}).

Consider a sequence of linear complexity $n$. Given $n$ consecutive elements of
the sequence, every subsequent element can be generated using the LRR
corresponding to the minimal polynomial. The vector consisting of $n$
consecutive elements of the sequence is called the state vector of the sequence.
We denote the $i$-th state vector of the sequence by $x(i)$ i.e., $x(i) =
[S(i),S(i+1),\ldots,S(i+n-1)]$. Observe that if the minimal polynomial of the
sequence is primitive then the sequence has $q^n-1$ different state
vectors i.e., every nonzero vector in $\F_q^n$ is a state vector of the
sequence. 

Let $\sigma S$ denote the sequence got by shifting the sequence $S$ once
to the left i.e., $\sigma S(k) = S(k+1)$. The $k$-th state vector of
$\sigma S$ is denoted by $\sigma x(k)$. Therefore $\sigma x(k) = x(k+1)$.
Observe that  $\sigma x(k) = x(k+1) = x(k)A $, where
\begin{eqnarray*}
\label{A}
  A = \left[
 \begin{matrix}
 0 & 0 & \ldots & 0 & a_0\\
 1 & 0 & \ldots & 0 & a_1\\
 0 & 1 & \ldots & 0 & a_2\\
 \vdots & \vdots & \ddots & \vdots & \vdots\\
 0 & 0 & \ldots & 1 & a_{n-1} 
 \end{matrix}
 \right]
   \in \F_q^{n \times n}
\end{eqnarray*}
This matrix is the companion matrix of the polynomial $p(s)= s^n -(\sum_0^{n-1}
a_is^i)$.  Observe that the 
companion matrix associated to the polynomial is unique. The sequence obtained by
shifting $S$ 
$\ell$ times to the left, is denoted by $\sigma^\ell S$ i.e $\sigma^\ell
S(k) = S(k+\ell)$.

Similar to sequences, we define a multisequence in $\F_q^m$ as a map from
$\mathbb{Z}$ to $\F_q^m$. 
 A multisequence $W = \{W(k)\}_{k \in \mathbb{Z}}$   is called
periodic if exists a finite
integer $r$ such that $W(k+r)   = W({k})$, for all $k$. As in the case
of scalar sequences, there exist linear  recurring relations between the
elements of the
multisequence. These relations are of the form
{\small
\begin{equation}
\label{veceq1}
W(k+n) = a_{n-1}W(k+n-1) + a_{n-2}W({k+n-2}) + \cdots + a_0W(k)\,\, \forall
k\textrm{ where
}
a_i \in \F_q
\end{equation}
}
 Analogous to scalar sequences, one can define
a polynomial $p(s) =  s^n - a_{n-1}s^{n-1} - a_{n-2}s^{n-2} - \cdots - a_0 $,
which can be associated with every LRR.
Again, the polynomials associated to all LRRs of a 
given periodic multisequence, form an ideal in the principal ideal domain
$F_q[s]$ and the monic
generator of this ideal is called the minimal polynomial of the multisequence.
The degree of the minimal polynomial is defined as the linear complexity of the
multisequence.    
      
  The $i^{\textrm{th}}$ component of each vector in $W$ gives a sequence of
scalars in $\F_q$. We call this sequence the
$i^{\textrm{th}}$ component sequence of $W$, denoted by $W_i$. Clearly, the
minimal polynomial of the multisequence is the least common multiple of the
minimal polynomials of the component sequences. Therefore, the minimal
polynomials of each of the component sequences divide the minimal polynomial of
the multisequence. Hence, if the minimal polynomial of the multisequence is an
irreducible polynomial $p(s)$, each of the nonzero component sequences also have
$p(s)$ as
their minimal polynomial.
%Henceforth, in this paper, we will only consider multisequences with primitive

Note that a multisequence, with linear complexity $n$ is completely determined
by the first $n$ terms (vectors).  The state of a multisequence can 
therefore be thought of as $n$ consecutive elements (vectors) of the 
multisequence. Each state is thus an $m\times n$ matrix.
 Thus, we have a sequence of  matrix states associated with every
multisequence. We denote the $k$-th matrix state of the multisequence $W$ by
$M_W(k)$, i.e. $M_W(k) = [W(k),W({k+1}),\ldots,W({n+k-1})]$. The
$i^{\textrm{th}}$ row, denoted by $x_i(k)$, of $M_W(k)$ is the $k-$th state
vector of the component sequence $W_i$. 
% Therefore $x_r(k+1) = x_r(k)A$, for all $1\leq r\leq m$, where
% $A$ is the companion matrix of the polynomial $p(x)$. Therefore $M_W(i+1) =
% M_W(k)A$. 
In the following theorem
we prove that for a periodic multisequence $W$, all matrix states have
the same column span.

\begin{lemma}
\label{span}
For a periodic multisequence, the column span of the matrix states is an
invariant.
\end{lemma}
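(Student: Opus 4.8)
The plan is to reduce everything to comparing two consecutive matrix states and showing $colspan(M_W(k)) = colspan(M_W(k+1))$ for every $k$; since the linear recurring relation \eqref{veceq1} holds for all $k \in \mathbb{Z}$, this equality then propagates to all matrix states by induction in both directions. The starting observation is that the columns of $M_W(k)$ are precisely the vectors $W(k), W(k+1), \ldots, W(k+n-1)$, so $colspan(M_W(k)) = \mathrm{span}\{W(k), \ldots, W(k+n-1)\}$, and the states $M_W(k)$ and $M_W(k+1)$ share all columns except that the former carries $W(k)$ where the latter carries $W(k+n)$.

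First I would establish the inclusion $colspan(M_W(k+1)) \subseteq colspan(M_W(k))$. The recurrence \eqref{veceq1} expresses $W(k+n)$ as an $\F_q$-linear combination of $W(k), \ldots, W(k+n-1)$, so $W(k+n) \in colspan(M_W(k))$; since the remaining columns of $M_W(k+1)$ are literally columns of $M_W(k)$, every column of $M_W(k+1)$ lies in $colspan(M_W(k))$, giving the inclusion.

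Next I would prove the reverse inclusion, and this is where I would invoke the standing assumption that $a_0 \neq 0$, which is valid without loss of generality for periodic multisequences (by the same argument as for scalar periodic sequences, \cite[Theorem 6.11]{lidl}). Rearranging \eqref{veceq1} and multiplying by $a_0^{-1}$ gives
\[
W(k) = a_0^{-1}\bigl(W(k+n) - a_{n-1}W(k+n-1) - \cdots - a_1 W(k+1)\bigr),
\]
which writes the only column of $M_W(k)$ not already present in $M_W(k+1)$ as a combination of columns of $M_W(k+1)$. Hence $colspan(M_W(k)) \subseteq colspan(M_W(k+1))$, and combined with the first inclusion we get equality for every $k$.

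The only real obstacle is the backward inclusion: the forward direction is immediate from the recurrence, but recovering the oldest column $W(k)$ from the newer ones requires inverting $a_0$, and this is exactly where periodicity enters. If $a_0$ were allowed to vanish the column span could strictly shrink as $k$ increases, so the invariance genuinely relies on the $a_0 \neq 0$ reduction rather than on the recurrence alone.
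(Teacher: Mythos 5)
Your proposal is correct and follows essentially the same route as the paper: both reduce to comparing consecutive states, obtain $colspan(M_W(k+1)) \subseteq colspan(M_W(k))$ directly from the recurrence, and invert $a_0 \neq 0$ (justified by periodicity) to recover $W(k)$ and get the reverse inclusion. Your closing remark correctly identifies where periodicity is actually used, which matches the paper's reliance on the $a_0 \neq 0$ reduction.
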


\begin{proof}
Consider a periodic multisequence $W$. It is enough to show that
$colspan(M_W(k)) = colspan(M_W(k+1))$, for any given integer $k$.
 Let the minimal polynomial of the multisequence be $p(s) =  s^n -
a_{n-1}s^{n-1} - a_{n-2}s^{n-2} - \cdots - a_0 $. Since $W({k+n}) = a_0W(k)
+ a_1W({k+1}) + \ldots + a_{n-1}W({k+n-1})$, therefore $W({k+n}) \in
colspan(M_W(k))$.
Thus,
\begin{eqnarray*}
 colspan(M_W(k+1)) \subseteq colspan(M_W(k)) 
\end{eqnarray*}
Since $a_0 \neq 0$, $W(k) = \frac{1}{a_0} (W({k+n}) - a_1 W({k+1}) - a_2W({k+2})
- \ldots -
a_{n-1}W({k+n-1}))$, i.e., $W(k) \in colspan(M_W(k+1))$. Hence 
\begin{eqnarray*}
 colspan(M_W(k)) \subseteq colspan(M_W(k+1)) 
\end{eqnarray*}
Therefore $colspan(M_W(k+1)) = colspan(M_W(k))$. Hence proved.  
% Since $rank(M_W(k)) =l$,
% there are $l$ linearly independent columns in $M_W(k) $. Let them be
% $S(i_1),S(i_2),\ldots,S(i_l)$, $k\leq i_1 < i_2 <\cdots <i_k\leq k+n-1 $.
% $S(n+k)$ is a linear combination of the columns of $M_W(k)$. Therefore the
% column space of $M_W(k+1)$ is a subspace of $\mathscr{V}$.

% $S(n+k)$ can be written as a linear combination of
% $S(i_1),S(i_2),\ldots,S(i_l)$. Therefore we can write
% $S(n+k)$ as $S(n+k) =b_1S(i_1)+b_2S(i_2)+\ldots+b_lS(i_l)$.  
% If $i_1 >k$ then $M_W(k+1)$ also has $S(i_1),S(i_2),\ldots,S(i_l)$
% among its columns. Hence column space of $M_W(k+1)$ is also $\mathscr{V}$.  
% If $i_1=k$ and $b_1 \neq0$, then we can replace $S(k)$ by $S(n+k)$ as a basis
% vector of $\mathscr{V}$. Therefore the columns of $M_W(k+1)$  also span
% the same space $\mathscr{V}$.  
% 
% Let us now consider the case when $i_1 = k$ and $b_1 = 0$. Let the
% characteristic polynomial of each sequence in the multisequence be $p(x) = x^n
% -a_{n-1}x^{n-1} -a_{n-2} x^{n-2}-\cdots-a_0$. Since $p(x)$ is primitive, $a_0
% \neq 0$. Now, $S(n+k) = a_0S(k) + a_1S(k+1) +\cdots + a_{n-1}S(k+n-1)$.
% Therefore when the vector $v = a_1S(k+1) + \cdots + a_{n-1}S(k+n-1)$ is
% expressed as a
% linear combination of $S(i_1),\ldots,S(i_l)$, the coefficient of  $S(i_1)$ is
% $-a_0$. Therefore the vector $v$, which lies in the span of the columns of
% $M_W(k+1)$, can replace $S(0)$ as a basis vector of $\mathscr{V}$. Therefore
% the
% columns space of $M_W(k+1)$ is also $\mathscr{V}$, i.e. $rank
% (M_W(k+1)) =
% l$. Hence proved.
\end{proof}

We define the dimension of a multisequence as follows
\begin{definition}
The dimension of a multisequence $W$ is defined as the rank of its matrix
states.
\end{definition}

As in the case of scalar sequences, any nonzero multisequence with a primitive
minimal polynomial $p(s)$ of degree $n$, has a period of $q^n-1$.
In this paper, we henceforth assume that the multisequences considered
have primitive minimal polynomials.

The first problem we address is the following:
\begin{problem}
 Given a positive integer $\ell$ and a primitive polynomial $p(s)$ of degree
$n$, how many
multisequences of dimension $\ell$ exist in $ \F_q^m$ with $p(s)$ as its minimal
polynomial. (Clearly $0 \leq \ell \leq min(m,n)$)
\end{problem}
 
 {\bf Two multisequences are considered the same if they
are
shifted versions of one another}, i.e., the multisequence $W$ is the same as 
its shifted version $\sigma^r W$  for any $r \in \mathbb{Z}$. 

 We denote the collection of $\ell$ dimensional
subspaces of
$\F_q^m$ by $G(\ell,m,\F_q)$. The cardinality of  $G(\ell,m,\F_q)$ is given by
\begin{equation}
 |G(\ell,m,\F_q)| =
\frac{(q^m-1)(q^m-q)\ldots(q^m-q^{\ell-1})}{(q^\ell-1)(q^\ell-q)\ldots(q^\ell-q^
{\ell-1})}
\end{equation}
 
\begin{lemma}
\label{NoMult}
Given a primitive polynomial $p(s)$ of degree $n$, the number of
multisequences in $\F_q^m$, with minimal polynomial $p(s)$, having dimension
$\ell$ is $ |G(\ell,m,\F_q)|
\times (q^n-q)(q^n-q^2)\ldots(q^n-q^{\ell-1}) $.
\end{lemma}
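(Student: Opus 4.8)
The plan is to reduce the counting problem to a problem about matrices by exploiting the correspondence between a multisequence and its matrix state at time $0$. First I would observe that a multisequence $W$ of linear complexity at most $n$ is completely determined by $M_W(0)$, whose $i$-th row is the initial state vector of the component sequence $W_i$. Conversely, \emph{every} matrix $M \in \F_q^{m\times n}$ arises this way: reading each row of $M$ as an initial state for the LRR with characteristic polynomial $p(s)$ produces a multisequence, and since each component sequence then satisfies this LRR, its minimal polynomial divides $p(s)$. As $p(s)$ is irreducible, every nonzero component has minimal polynomial exactly $p(s)$, so the minimal polynomial of $W$ equals $p(s)$ precisely when $M \neq 0$; and by the definition of dimension together with Lemma \ref{span}, the dimension of $W$ equals the rank of $M$. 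Hence, for $\ell \geq 1$, multisequences with minimal polynomial $p(s)$ and dimension $\ell$ are in bijection with $m\times n$ matrices of rank $\ell$, \emph{before} we quotient by shift-equivalence.

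Next I would translate the shift into a group action on these matrices. Since $x_i(k+1) = x_i(k)A$ row-by-row, we have $M_W(k+1) = M_W(k)A$, so the shifted multisequence $\sigma^r W$ corresponds to the matrix $M_W(0)A^r$. Therefore two rank-$\ell$ matrices represent the same multisequence if and only if they lie in the same orbit of the cyclic group generated by right multiplication by $A$, and the quantity we want is the number of such orbits. The decisive step is to show that all these orbits have the same size $q^n-1$. Here I would use primitivity: the minimal polynomial of the companion matrix $A$ is the irreducible $p(s)$, so $\F_q[A]$ is a field isomorphic to $\F_{q^n}$ and $A$ has multiplicative order $q^n-1$. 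Consequently, for every $r$ the element $A^r - I$ is either $0$ or invertible, so for any nonzero row vector $x$ we have $xA^r = x$ iff $A^r = I$ iff $(q^n-1)\mid r$. A rank-$\ell$ matrix with $\ell\geq 1$ has a nonzero row, so its stabilizer is trivial and its orbit has exactly $q^n-1$ elements.

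It then remains to count the $m\times n$ matrices of rank $\ell$ and divide by $q^n-1$. I would count by first choosing the column span $V \in G(\ell,m,\F_q)$, which can be done in $|G(\ell,m,\F_q)|$ ways, and then choosing the $n$ columns inside $V$ so that they span $V$; expressing them in a basis of $V$, this is the number of full-row-rank $\ell \times n$ matrices, namely $(q^n-1)(q^n-q)\cdots(q^n-q^{\ell-1})$. Dividing the product $|G(\ell,m,\F_q)|\,(q^n-1)(q^n-q)\cdots(q^n-q^{\ell-1})$ by the common orbit size $q^n-1$ cancels the leading factor and yields exactly $|G(\ell,m,\F_q)| \times (q^n-q)(q^n-q^2)\cdots(q^n-q^{\ell-1})$, as claimed. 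The main obstacle is the orbit-size step: everything else is a standard bijection and a routine rank-matrix count, but the uniform orbit size $q^n-1$ is precisely what makes the clean division possible, and it is here that the primitivity of $p(s)$ is indispensable.
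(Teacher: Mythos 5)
Your proposal is correct and follows essentially the same route as the paper: fix the column span (contributing the factor $|G(\ell,m,\F_q)|$), count full-row-rank $\ell\times n$ coordinate matrices, and divide by $q^n-1$ because each multisequence has that many matrix states. The only difference is that you justify the uniform orbit size via the triviality of stabilizers in $\F_q[A]\cong\F_{q^n}$, a step the paper simply asserts from primitivity.
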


\begin{proof}
 Given a multisequence $W$ of dimension $\ell$, by Lemma \ref{span},  
the column space of the matrix states $M_W(k)$ is a unique $\ell$
dimensional
subspace of $\F_q^m$.
Observe that there are $|G(\ell,m,\F_q)|$ subspaces of $\F_q^m$ that have
dimension $\ell$. Consider any one such $\ell-$ dimensional space $V$. Fix  a
basis
for $V$, say $v_1,v_2,\ldots, v_\ell$, where $v_i \in \F_q^m$. Let $T$ be
the matrix $ T = [v_1,v_2,\ldots,v_\ell]$. Any $M \in
\F_q^{m \times n}$ whose column span is $V$ can then be written as $M=TB$,
where $B \in \F_q^{\ell \times n}$. The number of such matrices $B$ is equal to 
$(q^n-1)(q^n-q)\ldots(q^n-q^{\ell-1})$ (choosing $\ell$ independent vectors in
$\F_q^n$). As the polynomial $p(s)$ is primitive, 
each multisequence has $q^n - 1$ distinct matrix states. Thus,
the number of multisequences with column span $V$ is
equal to $\frac{(q^n-1)(q^n-q)\ldots(q^n-q^{\ell-1})}{q^n-1} =
(q^n-q)(q^n-q^2)\ldots(q^n-q^{\ell-1})$. Therefore, given a primitive
polynomial $p(s)$ of degree $n$, the number of
multisequences in $\F_q^m$ with minimal polynomial $p(s)$ having dimension
$\ell$ is $ |G(\ell,m,\F_q)|
\times (q^n-q)(q^n-q^2)\ldots(q^n-q^{\ell-1}) $.
\end{proof}

If a multisequence in $\F_q^m$ has dimension $m$, its component sequences are
linearly independent. We can therefore give the following corollary to Lemma
\ref{NoMult}.

\begin{corollary}
\label{maincor}
Given a primitive minimal polynomial $p(s)$ of degree $n$, the number of
multisequences in $\F_q^m$, with minimal polynomial $p(s)$, having linearly
independent component sequences is
$(q^n-q)(q^n-q^2)\ldots(q^n-q^{m-1})$.
\end{corollary}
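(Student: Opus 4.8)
The plan is to recognize this statement as the special case $\ell = m$ of Lemma \ref{NoMult}, combined with the observation (noted immediately above the corollary) that a multisequence in $\F_q^m$ has full dimension $m$ precisely when its component sequences are linearly independent. The one point worth making explicit is this identification: the rows of a matrix state $M_W(k)$ are exactly the state vectors of the component sequences $W_1,\ldots,W_m$, so the rank of $M_W(k)$ equals $m$ if and only if these rows are linearly independent, i.e.\ if and only if the component sequences are linearly independent. Hence the collection counted by the corollary is precisely the set of dimension-$m$ multisequences counted by Lemma \ref{NoMult}.

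Carrying this out, I would substitute $\ell = m$ into the formula of Lemma \ref{NoMult}, obtaining the count $|G(m,m,\F_q)| \times (q^n-q)(q^n-q^2)\cdots(q^n-q^{m-1})$. The only remaining step is to evaluate $|G(m,m,\F_q)|$. Geometrically $G(m,m,\F_q)$ is the set of $m$-dimensional subspaces of the $m$-dimensional space $\F_q^m$, of which there is exactly one, namely $\F_q^m$ itself, so $|G(m,m,\F_q)| = 1$. Equivalently, setting $\ell = m$ in the displayed cardinality formula for $|G(\ell,m,\F_q)|$ makes its numerator and denominator identical, so the quotient is $1$. The count therefore collapses to $(q^n-q)(q^n-q^2)\cdots(q^n-q^{m-1})$, exactly as claimed.

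I expect no real obstacle here: all the combinatorial content already resides in Lemma \ref{NoMult}, and the corollary is simply its degenerate Grassmannian instance $\ell = m$, where the required column span is forced to be the entire ambient space $\F_q^m$ and the Grassmannian factor drops out. The proof is thus a one-line specialization, and the write-up need only record the identification of full dimension with linear independence of the component sequences and the evaluation $|G(m,m,\F_q)| = 1$.
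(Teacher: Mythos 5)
Your proposal is correct and matches the paper's own (implicit) argument exactly: the paper derives the corollary by noting that dimension $m$ is equivalent to linear independence of the component sequences and then specializing Lemma \ref{NoMult} to $\ell = m$, where $|G(m,m,\F_q)| = 1$. Your write-up simply makes these two steps explicit.
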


\section{Extensions of Multisequences}

We now look to extend an $m$-dimensional 
multisequence $W$ in $\F_q^m$  to an $r$-dimensional multisequence in $\F_q^r$ 
where $r > m$. Further, we impose a condition that the minimal polynomial of the new multisequence is 
the same as the minimal polynomial of $W$. An obvious way of keeping the 
minimal polynomial unchanged is by appending to $W$ its component sequences 
or their linear combinations. Thus $W_j = \sum_{i=1}^m a_iW_i$ for $j > m$, 
where $a_i \in \F_q$. The extended multisequence however continues to have dimension $m$.
On other hand, appending $W$ with shifted versions of the component sequences
may perhaps increase the dimension of the multisequence.

Let $R = (r_1, ... , r_m) \in \mathbb{Z}_+^m$, with $\sum r_{k} = r$. We define
the $R$-extension of the multisequence $W$ in $\F_q^m$ as the multisequence
$W_R$ in $\F_q^r$, whose component sequences are obtained from the component 
sequences of $W$ in the following order : $W_1, \sigma W_1, ... \sigma^{r_1 - 
1}W_1, W_2, \sigma W_2, ... \sigma^{r_2 - 1}W_2$ $, ... W_i, \sigma W_i, ... 
\sigma^{r_i - 1}W_i, ... W_m, \sigma W_m, ... \sigma^{r_m - 1}W_m$. Clearly, the
minimal polynomial of the 
multisequences $W_R$ and $W$ are the same. We can
therefore ask the following question.

\begin{problem}
\label{mainProblem}
 Given $R = (r_1,r_2,\ldots , r_m) \in \mathbb{Z}_+^m$, 
with $\sum r_k = r$, how many multisequences $W$ of rank $m$ in $\F_q^m$ give 
$R$-extended multisequences in $\F_q^r$ having dimension $r$?
\end{problem}

The solution to this problem is given by the following theorem.

 \begin{theorem}
\label{maintheorem}
Let $R = (r_1,r_2,\ldots,r_m) \in \mathbb{Z}_+^{m}$ such that 
$r = \sum r_i$ and let $p(s)$ be a primitive polynomial of degree
$n$. The number  of multisequences in
$\F_q^m$ with minimal polynomial $p(s)$ whose $R-$extensions have dimension $r$
is equal to $(q^n-q^{r-m+1})(q^n - q^{r-m+2})\ldots(q^n - q^{r-1}) $.
\end{theorem}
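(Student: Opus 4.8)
The plan is to pass to the field picture, where the cyclic structure of the component sequences becomes multiplication by a root of $p$. Let $A$ be the companion matrix of $p(s)$. Since $p$ is primitive, hence irreducible of degree $n$, the right action $x\mapsto xA$ makes $\F_q^n$ an $\F_q[s]$-module isomorphic to $\F_q[s]/(p(s))\cong\F_{q^n}$, with $A$ acting as multiplication by a root $\alpha$ of $p$. Under this isomorphism the initial state $x_i$ of the $i$-th component sequence corresponds to an element $\xi_i\in\F_{q^n}$, and $x_iA^j$ corresponds to $\alpha^j\xi_i$. The matrix state of $W_R$ has as its rows exactly the vectors $\{x_iA^j:1\le i\le m,\ 0\le j<r_i\}$, so $W_R$ has dimension $r$ if and only if the $r$ elements $\{\alpha^j\xi_i\}$ are $\F_q$-linearly independent in $\F_{q^n}$. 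Since the terms with $j=0$ are $\xi_1,\dots,\xi_m$ themselves, this independence already forces $\xi_1,\dots,\xi_m$ to be independent, i.e. it subsumes the hypothesis that $W$ has rank $m$. Thus the problem becomes: count the tuples $(\xi_1,\dots,\xi_m)\in\F_{q^n}^m$ for which $\{\alpha^j\xi_i\}$ is independent, modulo the shift equivalence.

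Next I would dispose of the shift. As $p$ is primitive, $\alpha$ generates $\F_{q^n}^\ast$ and $A$ has order $q^n-1$; shifting $W$ multiplies every $\xi_i$ by $\alpha$, so the shift orbit of a tuple is precisely its $\F_{q^n}^\ast$-scaling orbit, which has size $q^n-1$ because $\xi_1\ne 0$. Hence the number of multisequences equals the number of valid tuples divided by $q^n-1$ (equivalently, one may normalise $\xi_1=1$). For the distinguished composition $R^\ast=(r-m+1,1,\dots,1)$ the count is then immediate by adding one independent vector at a time: choosing $\xi_1$ arbitrary nonzero fixes a cyclic block $\xi_1\cdot\mathrm{span}_{\F_q}\{1,\alpha,\dots,\alpha^{\,r-m}\}$ of dimension $r-m+1$, and each subsequent size-one block contributes a single vector avoiding the span accumulated so far. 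This gives $(q^n-1)(q^n-q^{r-m+1})(q^n-q^{r-m+2})\cdots(q^n-q^{r-1})$ valid tuples, and dividing by $q^n-1$ yields exactly the claimed formula. What makes this greedy count exact is that a size-one block excludes precisely the current span, no more.

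The remaining, and principal, task is to show that the count depends only on $r$ and $m$, not on the composition $R$. It suffices to prove invariance under the elementary move $(\dots,a,\dots,b,\dots)\mapsto(\dots,a+1,\dots,b-1,\dots)$ with $b\ge 2$, since such moves connect all compositions of $r$ into $m$ parts. For two blocks this reduces to a clean ratio count: after normalisation one must count the $\zeta$ with $\zeta P_b\cap P_a\ne\{0\}$, where $P_t:=\mathrm{span}_{\F_q}\{1,\alpha,\dots,\alpha^{t-1}\}$; these are exactly the values $f(\alpha)/g(\alpha)$ with $\deg f<a$ and $\deg g<b$. Because $a+b\le n$, evaluation at $\alpha$ is injective on the relevant degrees ($\deg(fg')\le a+b-2<n$), so equality of two such values is equality of rational functions, and a count of reduced fractions shows the number of distinct values is $q^{a+b-1}$, a quantity depending only on $a+b$. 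This gives $V(a,b)=(q^n-1)(q^n-q^{a+b-1})$ independently of the split, confirming the formula in the case $m=2$.

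The hard part will be upgrading this to general $m$ in the presence of spectator blocks. When several blocks of size $\ge 2$ occur, the block-by-block count is genuinely not invariant term by term: adjoining a size-$t$ cyclic block to a space $U$ that is itself a sum of cyclic blocks excludes the set $\{\zeta:\zeta P_t\cap U\ne\{0\}\}$, whose cardinality depends on the fine structure of $U$ (through the dimensions $\dim(U\cap\lambda U)$ for ratios $\lambda$), not merely on $\dim U$; indeed a composition such as $(2,2,2)$ admits no fully clean greedy order. Consequently the elementary-move invariance must be established as an identity of totals rather than factor by factor — I would aim to build a bijection between the valid tuples for $(a,b,\text{spectators})$ and those for $(a+1,b-1,\text{spectators})$ that leaves the spectator data fixed, or to run an inclusion–exclusion over the excluded ratio set that is manifestly symmetric in the $r_i$. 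Controlling that excluded set relative to an arbitrary complementary subspace is where I expect the real work to lie.
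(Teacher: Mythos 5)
Your reformulation in $\F_{q^n}$ is sound: identifying the state of the $i$-th component with $\xi_i\in\F_{q^n}$, reading the rows of the $R$-extension's matrix state as $\alpha^j\xi_i$, and quotienting the shift by the $\alpha$-scaling orbit of size $q^n-1$ is all correct, and your greedy count for the staircase composition $(r-m+1,1,\dots,1)$ does produce the claimed product. The $m=2$ ratio count is also essentially right (the set $\{f(\alpha)/g(\alpha)\}\cup\{0\}$ having cardinality $q^{a+b-1}$ checks out, though you only sketch the reduced-fraction count). But the proof has a genuine gap, and you name it yourself: for general $m$ you never establish that the count is independent of the composition $R$. Everything in your final paragraph is a plan, not an argument --- you propose either a bijection fixing spectator blocks or a symmetric inclusion--exclusion, and you correctly observe why the naive route fails (the set $\{\zeta:\zeta P_t\cap U\neq\{0\}\}$ depends on the fine structure of $U$, e.g.\ on the dimensions $\dim(U\cap\lambda U)$, not just on $\dim U$, so compositions like $(2,2,2)$ admit no clean greedy order). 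Since the theorem is stated for arbitrary $R\in\mathbb{Z}_+^m$, the cases you actually complete ($m\le 2$ and the staircase) do not suffice.

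It is worth noting that the paper avoids this difficulty entirely by refusing to compare two compositions at the same degree $n$. Instead it sets up a recursion $N(G,k)=q^{m-1}N(\Phi(G),k-1)$ along the ``$R$-road'': each step simultaneously increments one coordinate of the composition \emph{and} the degree of the primitive minimal polynomial, realized concretely by appending one free entry to each of the $m-1$ non-active rows of a suitably normalized matrix state (whence the factor $q^{m-1}$) and verifying via a block-triangular rank computation that maximal dimension is preserved in both directions. The base case is the count of full-rank multisequences with a degree-$(n-r+m)$ primitive polynomial. Your field-theoretic picture is arguably cleaner for the cases it covers, and the composition-independence you conjecture is true (the paper records it as a remark, as a \emph{consequence} of the formula), but as submitted your argument does not prove the theorem; you would either need to carry out the elementary-move bijection you describe --- which is where the real work lies --- or switch to an induction that, like the paper's, varies the degree of the minimal polynomial along with the composition.
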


In the remainder of this section we give a constructive proof to this theorem.
Starting with a multisequence in $\F_q^m$ with dimension $m$, we recursively
generate a series
of $r-m$ multisequences in $\F_q^m$ culminating in a desired multisequence
whose $R-$extension has dimension $r$. We first prove a few preparatory
results which when put together gives us the constructive proof.   
 
 For any $G = (g_1,g_2,\ldots,g_m) \in \mathbb{Z}_+^m$, let $G_{max}
 = max_ig_i$. %Further, for a given polynomial $p(s)$, let
% $N(G,p(s))$ denote the number of multisequences in $\F_q^m$ with minimal
% polynomial $p(s)$, whose $G-$extensions have rank $g = \sum g_k$.  
Let $\Phi$ define the following map from $\mathbb{Z}_+^m$ to
$\mathbb{Z}_+^m$. 
\begin{eqnarray*}
 \Phi(g_1,g_2,\ldots,g_m) = (g_1,g_2,\ldots,g_{c-1},g_c -
1,g_{c+1},\ldots,g_m)\\ \textrm{ where } c \textrm{ is the smallest integer such
that }g_c =G_{max}
\end{eqnarray*}
Note that the repeated action of $\Phi$ on any element of
$\mathbb{Z}_+^m$ eventually gives ${\bf 1} = (1,1,\ldots,1)$. Thus, given
$R = (r_1,r_2,\ldots,r_m) \in \mathbb{Z}_+^m$, $\Phi$
defines a unique path from $R $ to ${\bf 1}$. We call this path the `$R-$road'.
\begin{example}
 The $R-$road for $R = (3, 2, 5, 4, 1)$ 
is $(3,2,4,4,1)$ $(3,2,3,4,1)$ $(3,2,3,3,1)$ $(2,2,3,3,1)$ $(2,2,2,3,1)$ $
(2,2,2,2,1)$ $(1,2,2,2,1)$ $(1,1,2,2,1)$ $(1,1,1,2,1)$ $(1,1,1,1,1)$.
\end{example}
Clearly given any point $G=(g_1,g_2,\ldots,g_m)$ on an $R-$road, for any
other point
$Q = (q_1,q_2,\ldots,q_m)$ lying on the path from $R$ to $G$, $q_i
\geq g_i$ $\forall i$. Besides, the map $\Phi$ ensures
the following: \begin{itemize}{\item[-] If $i<j$, $g_i> g_j$ if and only if $g_i
> r_j$}\end{itemize}

We now look to retrace the $R$-road from ${\bf 1}$ to $R$. As a first step, we
prove the following lemma.

\begin{lemma}
For every point $G = (g_1,g_2,\ldots,g_m) \neq R$ on the $R-$road, there exists
a coordinate $g_c$ which satisfies
at least one of the following conditions:
\begin{enumerate}
 \item $g_c = G_{max}-1$ and $g_c<r_c$.
 \item $g_c = G_{max}$ and $g_c<r_c$.
\end{enumerate}
\end{lemma}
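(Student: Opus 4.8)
The plan is to exploit the fact that every point $G \neq R$ on the $R$-road has an immediate predecessor, namely the point $G'$ with $\Phi(G') = G$, and to show that the single coordinate modified in this last step is exactly the $g_c$ demanded by the lemma. Writing the road as $R = P_0 \to P_1 \to \cdots$ with $P_{k+1} = \Phi(P_k)$, the hypothesis $G \neq R$ forces $G = P_k$ for some $k \geq 1$, so $G = \Phi(G')$ with $G' = P_{k-1}$. By the definition of $\Phi$ there is a unique index $c$ with $g_c = g'_c - 1$ and $g_i = g'_i$ for $i \neq c$, where $c$ is the smallest index attaining $g'_c = G'_{max}$.

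First I would dispose of the inequality $g_c < r_c$, which both alternatives require. This is immediate from the monotonicity of the road already recorded in the excerpt: every coordinate weakly decreases along the path from $R$, so in particular $g'_c \leq r_c$, whence $g_c = g'_c - 1 < g'_c \leq r_c$. Thus for this choice of $c$ the condition $g_c < r_c$ holds automatically, and the only remaining task is to locate $g_c$ relative to $G_{max}$.

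The key step is a short comparison of $G_{max}$ with $G'_{max}$, based on the observation that lowering a single entry by one either leaves the maximum unchanged or decreases it by exactly one. Since $g'_i \leq G'_{max}$ for all $i$ and only coordinate $c$ is altered, we have $G_{max} \leq G'_{max}$. If some index $j \neq c$ also attained $g'_j = G'_{max}$, then $g_j = g'_j = G'_{max}$ is untouched, so $G_{max} = G'_{max}$ and $g_c = G'_{max} - 1 = G_{max} - 1$, which is alternative (1). Otherwise $c$ is the unique maximizer of $G'$, every other entry is at most $G'_{max} - 1$, and $g_c = G'_{max} - 1$ becomes the new maximum, giving $G_{max} = g_c$, which is alternative (2).

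The step I expect to be the only genuine subtlety is this dichotomy: correctly phrasing ``was the old maximum still attained after the decrement?'' and checking that the two outcomes match alternatives (1) and (2) respectively, together with the bookkeeping that $c$ being the \emph{smallest} maximizer of $G'$ does not interfere (any tying index lies to the right of $c$ and plays no role in the argument). Everything else --- the existence of a predecessor and the bound $g_c < r_c$ --- follows directly from the properties of the $R$-road already established, so no additional machinery is needed.
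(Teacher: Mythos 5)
Your proof is correct and follows essentially the same route as the paper's: take the unique predecessor $G'$ on the road with $\Phi(G')=G$, note that the decremented coordinate $c$ satisfies $g_c = g'_c - 1 < g'_c \leq r_c$, and split on whether the old maximum of $G'$ is attained by some index other than $c$ (giving $g_c = G_{max}-1$) or only by $c$ (giving $g_c = G_{max}$). The paper phrases the dichotomy as $g_c+1 > g_i$ for all $i \neq c$ versus $g_c+1 = g_i$ for some $i$, which is the same case analysis in slightly different notation.
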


\begin{proof}
 For every point $G = (g_1,g_2,\ldots,g_m) \neq R$ on the $R-$road, there exists
a unique point $G^*$ on the $R-$road such that $\Phi(G^*) = G$.
Now, $G^* = (g_1,g_2,\ldots,g_{c-1},g_c+1,g_{c+1},\ldots,g_m)$, where $g_c+1
\geq g_i$ $\forall i \neq c$. Also, since $G^*$ is on the path from $R$ to
${\bf 1}$, $g_{c}+1 \leq r_c$. Therefore, $g_c<r_c$. If $g_c+1>g_i$
$\forall i \neq c$ then $g_c = G_{max}$. If instead, there exists an $i$ such
that $g_c+1 = g_i$, then $g_c = G_{max}-1$. Hence proved.
\end{proof}

We therefore have the following definition.
\begin{definition}
Consider an $R = (r_1,r_2,\ldots,r_m) \in \mathbb{Z}_+^m$. For every point $G =
(g_1,g_2,\ldots,g_m) \neq R$, 
on the $R-$road the active coordinate is defined as follows: 
\begin{enumerate}
\item
If there exists a coordinate $g_c$ such that $g_c = G_{max}-1$ and $g_c <r_c$,
then the active coordinate is the coordinate corresponding to the largest such
$c$. 
\item
In the event of there being no $g_c$ that satisfies point 1, the active
coordinate is the
coordinate corresponding to the largest $c$ such that $g_c=G_{max}$ and
$g_c<r_c$.
\end{enumerate}
\end{definition}

It can be easily seen that one can traverse the $R-$road backwards from ${\bf 1}$ to $R$
by repeatedly
incrementing the active coordinate at every point. This is demonstrated in the
following example:
\begin{example}
 Let $R = (3, 2, 5, 4, 1)$. Starting from ${\bf 1}$ the $R-$road is traversed
backwards as follows: (At every point the active coordinate is underlined) 
$(1,1,1,{\underline1}, 1)$ $(1,1,{\underline 1},2,1)$ $(1,{\underline1},2,2,1)$
$({\underline1},2,2,2,1)$ $(2,2,2,{\underline2},1)$ $(2,2,{\underline2},3,1)$
$({\underline2},2,3,3,1)$ $(3,2,3,{\underline3},1)$ $(3,2,{\underline3},4,1)$
$(3,2,{\underline4},4,1)$ $(3, 2, 5, 4, 1)$.
\end{example}
 
Detecting the active
coordinate of any point $G$ involves the following steps:
\begin{itemize}
 \item Find $G_{max}$.
 \item Find the largest $i$ such that the $i-$th coordinate has value
$G_{max}-1$ and is less than $r_i$.
 \item In the event of there being no $i$ satisfying the preceding condition, find
the largest $j$ such that
the $j-$th coordinate has value $G_{max}$ and is less than $r_j$.
\end{itemize}

Notice that each of the above steps can be implemented in $O(m)$ operations.

For generating multisequences with $R$-extensions having maximum dimension, 
we travel backwards along the $R$-road from ${\bf 1}$ to $R$. During this
backward 
traversal, at every point $G$ on the $R$-road, we recursively generate a 
multisequence whose $G$-extension has maximum dimension.

We now make the following observation: Given a matrix $A\in\F_q^{\ell \times
\ell}$ in the companion form and a vector
$x = (b_1, b_2, \ldots, b_\ell) \in \F_q^{\ell}$, for $k < \ell$, $xA^k$ has the
following form
\begin{eqnarray*}
 xA^k = (b_{k+1},b_{k+2},\ldots,b_\ell,\underbrace{*,*,\ldots,*}_{k\textrm{
entries}})
\end{eqnarray*}
 where the $*$s are elements in $\F_q$, whose values depend on the matrix $A$.
Therefore, the matrix $[x; xA; \ldots; xA^{k-1}]$ has the following
structure.
\begin{eqnarray*}
\left[
 \begin{array}{cccccccc}
  b_1 & b_2 & \ldots & b_{\ell-k+1} &   b_{\ell-k+2} & \ldots & b_{\ell-1} &
b_{\ell}\\
  b_2 & b_3 & \ldots & b_{\ell-k+2} &   b_{\ell-k+3} & \ldots & b_{\ell} & *\\
  \vdots& \vdots &\vdots &\vdots &\vdots &\vdots &\vdots &\vdots\\
  b_k & b_{k+1} & \ldots & b_{\ell} & * & \ldots & * & *\\
 \end{array}
\right]
\end{eqnarray*}    

For any $G \in \mathbb{Z}_+^m$, let $N(G,k)$ denote the number of multisequences
in $\F_q^m$ with a given primitive minimal polynomial of degree $k$, whose $G-$extensions have maximum dimension.
% {\bf For each $j$, such that $n-r+m \leq j \leq n$, let $p_j(s)$ be a
% primitive
% polynomial of degree $j$. Let $A_j$s be the corresponding companion matrices.
% For any $G \in \mathbb{Z}_+^m$, let $N(G,j)$ denote the number of
% multisequences
% in $\F_q^m$ with a given primitive minimal polynomial of degree $i$ whose $G-$
% extensions have maximum dimension.}
\begin{theorem}
\label{starLemma}
Let $R = (r_1,r_2,\ldots,r_m)$, and let $G = (g_1,g_2,\ldots,g_m)$ and $\Phi(G)$
be consecutive points on the $R-$road. Then,  
\begin{equation}
 N(G,{k}) = q^{m-1}N(\Phi(G),{k-1})
\end{equation}
where $k$ is any integer greater than $g = \sum_{i=1}^mg_m$.
\end{theorem}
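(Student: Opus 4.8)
The plan is to reduce the statement to a counting problem about full-rank Hankel-block matrices and then to exhibit a $q^{m-1}$-to-one correspondence. First I would reinterpret $N(G,k)$ concretely. By Lemma~\ref{span} and the fact that $M_W(j+1)=M_W(j)A$ for the companion matrix $A$ of the (degree-$k$) minimal polynomial, a multisequence of dimension $m$ is determined, up to shift, by one matrix state $M=[x_1;\ldots;x_m]\in\F_q^{m\times k}$, and the $q^k-1$ distinct states of a fixed multisequence form a single orbit under right multiplication by $\langle A\rangle$. The $G$-extension state is the $g\times k$ matrix whose block $i$ is $[x_i;x_iA;\ldots;x_iA^{g_i-1}]$, i.e.\ exactly the Hankel-block matrix described in the observation preceding the theorem; its rank is the dimension of the $G$-extension. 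Hence, with $g=\sum_i g_i$,
\[
N(G,k)=\frac{1}{q^k-1}\,\#\Big\{(x_1,\ldots,x_m)\in(\F_q^k)^m:\ \{x_iA^{t}:\,0\le t\le g_i-1,\ 1\le i\le m\}\ \text{linearly independent}\Big\},
\]
and since full rank forces the $x_i$ to be independent (hence nonzero), every such tuple automatically has $p(s)$ as its minimal polynomial. The base point of the induction, $G=\mathbf 1$, is then precisely Corollary~\ref{maincor}.

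Second, to prove the recursion I would construct a map $\Psi$ from the degree-$k$ solutions for $G$ onto the degree-$(k-1)$ solutions for $\Phi(G)$ and show every fibre has size $q^{m-1}$. Writing $c$ for the active coordinate removed by $\Phi$ (so $g_c=G_{max}$), the $\Phi(G)$-extension state is obtained from the $G$-extension state by deleting the single top Krylov row $x_cA^{g_c-1}$ of block $c$; thus $M_G$ has rank $g$ iff $M_{\Phi(G)}$ has rank $g-1$ \emph{and} $x_cA^{g_c-1}$ lies outside its row space. The intended $\Psi$ couples this deletion of one row with a reduction of the recurrence degree by one, the factor $q^{m-1}$ arising, when reversing $\Psi$, from the freedom of prescribing one further coordinate in each of the $m$ component sequences, the shift equivalence collapsing one of these $m$ degrees of freedom. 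The tools that keep the ranks under control are the three structural facts present in the setup: the companion matrix is cyclic with irreducible characteristic polynomial, so each block $[x_i;\ldots;x_iA^{g_i-1}]$ is independent whenever $g_i\le k$; $A-\lambda I$ is invertible for every $\lambda\in\F_q$; and $a_0\neq0$, so the recurrence generating the $\ast$-entries in the Hankel observation is invertible.

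The hard part will be the degree reduction itself. A degree-$k$ recurring sequence does \emph{not} restrict to a degree-$(k-1)$ one---its length-$(k-1)$ windows over-determine any shorter recurrence---so $\Psi$ cannot be a naive truncation of columns, and this is exactly where a careless argument fails: for a fixed already-placed configuration, the number of admissible continuations of a single Krylov chain depends on the fine structure of the current span (on how many shifts $x_iA^{t}$ accidentally fall into it), not merely on its dimension, so the per-block factors are individually not clean and only the combined operation ``drop the active row and drop one degree'' yields the uniform ratio $q^{m-1}$. Concretely I expect to have to show that $\Psi$ is well defined, that both directions of the rank equivalence survive the degree change, and that the fibre is uniformly $q^{m-1}$ after quotienting by the shift. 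An equivalent but equally substantial route is to evaluate $N(G,k)=\prod_{j=1}^{m-1}\big(q^{k}-q^{\,g-m+j}\big)$ by a single global count, after which the recursion reduces to the one-line remark that passing from $(G,k)$ to $(\Phi(G),k-1)$ divides each of the $m-1$ factors by $q$.
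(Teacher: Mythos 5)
Your overall architecture is the same as the paper's: a map between (normalized) matrix states at degree $k$ whose $G$-extensions have full rank and matrix states at degree $k-1$ whose $\Phi(G)$-extensions have full rank, with fibres of size $q^{m-1}$, verified in both directions. But the one concrete mechanism you offer for the fibre count does not work as stated, and the step you yourself flag as ``the hard part'' is exactly the step the paper's proof supplies and yours does not. You propose that the factor $q^{m-1}$ arises from ``prescribing one further coordinate in each of the $m$ component sequences, the shift equivalence collapsing one of these $m$ degrees of freedom.'' Shift orbits of states have size $q^k-1$ at degree $k$ and $q^{k-1}-1$ at degree $k-1$, so a tuple-level map with uniform fibres cannot give the clean ratio $q^{m-1}$: the number of state tuples is $(q^k-1)N(G,k)$ versus $(q^{k-1}-1)N(\Phi(G),k-1)$, and $(q^k-1)/(q^{k-1}-1)$ is not a power of $q$. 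The quotient by shift has to be taken by choosing one canonical state per orbit, not by dividing out a factor of $q$.

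The missing idea is the normalization that makes everything work at once: because the minimal polynomial is primitive, the $c$-th component sequence (where $c$ is the coordinate decremented by $\Phi$) runs through every nonzero vector of $\F_q^{k}$ exactly once per period, so each multisequence has a \emph{unique} matrix state whose $c$-th row is $e_k^k=(0,\ldots,0,1)$. This simultaneously (i) selects one representative per shift class, so that counting normalized states \emph{is} counting multisequences; (ii) leaves exactly the other $m-1$ rows with a freely appended last entry $d_i$, giving the fibre size $q^{m-1}$ directly rather than as $q^m/q$; and (iii) forces the $c$-th block of the $G$-extension state into an anti-triangular form whose nonzero entries sit only in the last $g_c$ columns, so that deleting the last column together with the row $e_k^k$ drops the rank by exactly one and lands on a matrix row-equivalent to the $\Phi(G)$-extension state at degree $k-1$. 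Without (iii) your claim that ``both directions of the rank equivalence survive the degree change'' is unsupported --- you correctly observe that truncation of a degree-$k$ recurrence is not a degree-$(k-1)$ recurrence, but you do not resolve that obstruction. Your closing alternative, evaluating $N(G,k)=\prod_{j=1}^{m-1}\bigl(q^{k}-q^{\,g-m+j}\bigr)$ ``by a single global count,'' is consistent with the theorem's answer but is not an argument: producing that product is exactly what the recursion is for. As written, the proposal is a plan that identifies the right difficulty but does not overcome it.
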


\begin{proof}
Let $c$ be the smallest integer such that $g_c =
G_{max}$. Therefore $\Phi(G) =(g_1, g_2, \ldots, g_{c-1}, g_c-1, g_{c+1},
\ldots, g_m)$. Let $W$ be a multisequence in $\F_q^m$ whose minimal 
polynomial $p_{k-1}(s)$ is a primitive polynomial of degree $k-1$. Further 
assume that the $\Phi(G)-$extension of $W$ has dimension $g-1$.
 Each matrix state of $W$ is therefore a matrix in $\F_q^{m\times (k-1)}$ with
full row rank. As $p_{k-1}(s)$ is a primitive polynomial 
of degree $k-1$, there is a matrix state $M$ of $W$, whose $c-$th row is 
$e_{k-1}^{k-1} = (0, 0, ..., 0, 1)$. For $i \neq c$, let
$x_i=[b_{i1},b_{i2},\ldots,b_{1(k-1)}]$ be
the $i-$th
row of this $M$. Therefore, $M = [x_1; x_2; \ldots; x_{c-1}; e_{k-1}^{k-1};
x_{c+1};
\ldots;x_m]$. Now expand $M$ to a matrix $M^* \in \F_q^{m \times
k}$ as follows:
\begin{enumerate}
 \item For every $i \neq c$, append the $i-$th row of $M$ with any element $d_i$
of $\F_q$. Therefore, the $i-$th row of $M^*$ is $x_i^* = (x_i,d_i) \in
\F_q^{k}$, for some $d_i \in \F_q$.
\item Let the $c-$th row of $M^*$ be $e_{k}^{k}$ i.e., $(0,0,\ldots,0,1)$.
 
\end{enumerate}
Let $p_k(s)$ be a primitive polynomial of degree $k$. Using $M^*$ 
as a matrix state, one can generate a multisequence $W^*$ whose minimal 
polynomial is $p_k(s)$. We claim that $W^*$ has a $G-$extension with dimension 
$g$.  

As $ M$ is a matrix state of $W$, the following matrix $M_{\Phi(G)}$ is a
 matrix state of the $\Phi(G)-$ extension of $W$:
\begin{eqnarray*}
 M_{\Phi(G)} &=& [\textcolor{green}{x_1;x_1A_{k-1};\ldots;x_1A_{k-1}^{g_1-1}};
\textcolor{red}{x_2;x_2A_{k-1};\ldots;x_2A_{k-1}^{g_2-1}}; \ldots;
 \textcolor{blue}{x_{c-1};x_{c-1}A_{k-1};\ldots;x_{c-1}A_{k-1}^{g_{(c-1)}-1}}
;\\&&
e_{k-1}^{k-1};e_{k-1}^{k-1}A_{k-1};\ldots;e_{k-1}^{k-1}A_{k-1}^{g_c-2}
;\textcolor {cyan} {
x_{c+1};x_{c+1}A_{k-1};\ldots;x_{c+1}A_{k-1}^{g_{(c+1)}-1}};
\ldots;\textcolor{magenta}{x_m;x_mA_{k-1};\ldots;}\\&&
\textcolor{magenta}{x_mA_{k-1}^{g_m-1}}]\\
&& \textrm{where } A_{k-1} \textrm{ is the companion matrix of the polynomial
}p_{k-1}(s)
\end{eqnarray*}
 The $c-$th block of rows of
$M_{\Phi(G)}$ has the following structure:
 \begin{eqnarray*}
\left[ \begin{array}{cccc|cccc}
  0 & 0 & \cdots & 0 & 0 & \cdots & 0 & 1\\
  0 & 0 & \cdots & 0 & 0 & \cdots & 1 & *\\
  \vdots & \vdots & \vdots & \cdots & \vdots & \vdots & \vdots & \vdots\\
  0 & 0 & \cdots &  0& 1 & \cdots & * & *\\
 \end{array}\right]\in \F_q^{(g_c-1) \times (k-1)}
\end{eqnarray*}

For $1\leq i\neq c \leq m$, let $x_i = (b_{i1},b_{i2}, \ldots,b_{i(k-1)})$.
The corresponding $i-$th block of rows of $M_{\Phi(G)}$ has the following
structure:
\begin{eqnarray*}
\left[ \begin{array}{cccc|cccccc}
  b_{i1} & b_{i2} & \cdots & b_{i(k-g_c)} &b_{i(k-g_c+1)} &\cdots&
b_{i(k-g_i+1)} & \cdots & b_{i(k-2)} & b_{i(k-1)}\\
  b_{i2} & b_{i3} & \cdots &b_{i(k-g_c+1)} & b_{i(k-g_c+2)} &
\cdots&b_{i(k-g_i+2)} &
\cdots & b_{i(k-1)} & *\\
  \vdots & \vdots & \vdots & \cdots & \vdots & \vdots & \vdots & \vdots&
\vdots&\vdots \\
  b_{ig_i} & b_{i(g_i+1)} & \cdots &b_{i(k-g_c+g_i-1)} & b_{i(k-g_c+g_i)}
&\cdots &
b_{i(k-1)} & \cdots & * & *\\
 \end{array}\right]\in \F_q^{g_i\times (k-1)}
\end{eqnarray*}
The $*$s shown in the blocks above represent entries from $\F_q$ which depend on
the matrix $A_{k-1}$. Since $g_c\geq g_i$ $\forall i$, the $*$s appear only in
the last $g_c-1$ columns of $M_{\Phi(G)}$ (shown as the trailing submatrix after the vertical line). As $\Phi(G)-$extension of $W$ has
rank $g-1$, therefore $M_{\Phi(G)}$ has rank $g-1$. 

Similarly, corresponding to the matrix state $M^*$ of $W^*$, we have the
following matrix state of the $G-$extension of $W^*$.
\begin{eqnarray*}
 M_{G}^* &=& [\textcolor{green}{x_1^*;x_1^*A_{k};\ldots;x_1^*A_{k}^{g_1-1}};
\textcolor{red}{x_2^*;x_2^*A_{k};\ldots;x_2^*A_{k}^{g_2-1}}; \ldots;
 \textcolor{blue}{x_{c-1}^*;x_{c-1}^*A_{k};\ldots;x_{c-1}^*A_{k}^{g_{c-1}-1}}
;\\&&
e_{k}^{k};e_{k}^{k}A_{k};\ldots;e_{k}^{k}A_{k}^{g_c-1}
;\textcolor {cyan} {
x_{c+1}^*;x_{c+1}^*A_{k};\ldots;x_{c+1}^*A_{k}^{g_{c+1}-1}};
\ldots;\textcolor{magenta}{
x_m^*;x_m^*A_{k};\ldots;x_m^*A_{k}^{g_m-1}}]\\
&& \textrm{where } A_{k} \textrm{ is the companion matrix of the polynomial
}p_{k}(s).
\end{eqnarray*}

 For $i \neq c$, the $i-$th block of $M_G^*$ is
$[x_i^*;x_i^*A_{k};\ldots;x_i^*A_{k}^{g_i-1}]$ (recall that $x_i^* =
(x_i,d_i)$), where $A_{k}$ is the companion matrix of the polynomial
$p_{k}(s)$. This block has the following structure
\begin{eqnarray*}
\left[ \begin{array}{cccc|cccccc}
  b_{i1} & b_{i2} & \cdots & b_{i(k-g_c)} & \cdots & b_{i(k-g_i+1)} &
 b_{i(k-g_i +2)} &\cdots  & b_{i(k-1)} & \textcolor{red}{d_i}\\
  b_{i2} & b_{i3} & \cdots & b_{i(k-g_c+1)} & \cdots& b_{i(k-g_i+2)} &
 b_{i(k-g_i+3)} &\cdots  & \textcolor{red}{ d_i} & *\\
  \vdots & \vdots & \vdots & \cdots & \vdots & \vdots & \vdots & \vdots&
\vdots & {\vdots}\\
  b_{ig_i} & b_{i(g_i+1)} & \cdots & b_{i(k-g_c+g_i-1)} &\cdots &
b_{i(k-1)} & \textcolor{red}{d_i}  & \cdots &  * & {*}\\
 \end{array}\right]\in \F_q^{g_i\times (k)}
\end{eqnarray*}

 The $c-$th block of $M_G^*$ is
$[e_{k}^{k};e_{k}^{k}A_{k};\ldots;e_{k}^{k}A_{k}^{g_c-1}]$.
This block has the following structure.
 \begin{eqnarray*}
\left[ \begin{array}{cccc|cccc}
  0 & 0 & \cdots & 0 & 0 & \cdots & 0 & 1\\
  0 & 0 & \cdots & 0 & 0 & \cdots & 1 & *\\
  \vdots & \vdots & \vdots & \cdots & \vdots & \vdots & \vdots & \vdots\\
  0 & 0 & \cdots &  0& 1 & \cdots & * & *\\
 \end{array}\right]\in \F_q^{(g_c) \times (k)}
\end{eqnarray*}

Let $M_G$ be the submatrix of $M_G^*$ got by removing its last column and the
first row of its $c-$th block. Observe that $rank(M_G) = rank(M_G^*)-1$. By the
structure of the $c-$th block of
$M_G$ one can clearly see that this submatrix $M_G$ can be modified to $M_{\Phi(G)}$
using elementary row operations. Hence this submatrix $M_G$ has rank $g-1$. This
implies that $M_G^*$ has rank $g$. Therefore, $W^*$ does have a $G-$extension with dimension $g$.

Note that each of the $d_i$s can be chosen in
$q$ ways. Each such choice yields a different matrix $M^*$ and hence a
different multisequence $W^*$. As a
result {\bf for every multisequence $W$ with minimal polynomial $p_{k-1}(s)$,
the above process gives us $q^{m-1}$
multisequences $W^*$ with minimal polynomial $p_{k}(s)$}. 
Therefore, 
\begin{equation}
\label{eqn1}
 N(G,k) \geq q^{m-1}N(\Phi(G),k-1)
\end{equation}

Conversely, consider a multisequence $U^*$ in $\F_q^m$ with primitive minimal
polynomial $p_k(s)$ whose $G-$extension has rank $g$. Consider its matrix state
$M_1^* \in \F_q^{m \times k}$ whose $c-$th row is $e_k^k$. Now $M_1^*$ can be
reduced to a matrix  $M_1 \in \F_q^{m \times (k-1)}$ as follows:
\begin{enumerate}
 \item For $i \neq c$ remove the last entry of the $i-$th row.
 \item Let the $c-$th row of $M_1$ be $e_{k-1}^{k-1}$
\end{enumerate}
 
Let $M_1$ generate a multisequence $U$ having primitive minimal polynomial $p_{k-1}(s)$. Using similar arguments as those used earlier in the proof, one can prove that
%$M_1$ is a matrix state of a multisequence $U$ with primitive minimal
%polynomial $p_{k-1}(s)$, whose 
the $\Phi(G)-$extension of $U$ has dimension $g-1$. 
 Note that the matrix $M_1$ is independent  of  the last entries of the rows of
$M_1^*$. Hence, there are $q^{m-1}$ matrices (including $M_1^*$),
 with $c-$th row $e_{k}^{k}$, which have the same first $k-1$ columns as
 $M_1$.  By the above process each one of these matrices 
  gives the same matrix $M_1$ (and hence the same multisequence $U$).
 Besides if we start with a matrix with $c-$th row $e_{k}^{k}$ which differs
from $M_1$ in any entry corresponding to the first $k-1$ columns, it results
in a different $M_1$ (and hence a different multisequence $U$). Therefore,
 \begin{eqnarray}
 \label{eqn2}
 N(\Phi(G),{k-1}) &\geq& \frac{ N(G,{k})}{q^{m-1}}  \\
 \Rightarrow q^{m-1}N(\Phi(G),{k-1}) &\geq& N(G,{k}) \nonumber
 \end{eqnarray}
 Thus, from equations \eqref{eqn1} and \eqref{eqn2} we can conclude that
 \begin{eqnarray*}
  N(G,{k}) = q^{m-1}N(\Phi(G),{k-1})
 \end{eqnarray*}

 \end{proof}

Using this result, Theorem \ref{maintheorem} can by proved in the following
manner:
\begin{proof}[Proof of Theorem \ref{maintheorem}]
 For each $j$, such that $n-r+m \leq j \leq n$, let $p_j(s)$ be a given
 primitive polynomial of degree $j$. For every point $G= (g_1,g_2,\ldots,g_m)$
on the $R-$road, let $g = \sum_{i=1}^mg_i$. As we have seen in the proof of
Theorem \ref{starLemma}, starting from a multisequence in $\F_q^m$ with
dimension
$m$ (i.e., its ${\bf 1}-$extension has maximum dimension), having minimal
polynomial $p_{n-r+m}(s)$, we can recursively generate multisequences in
$\F_q^m$, with minimal polynomial $p_{n-r+g}(s)$, whose $G-$extensions have
maximum dimension, for every $G$ on the $R-$road.

By Theorem \ref{starLemma}, for any two consecutive points, $\Phi(G)$ and $G =
(g_1,g_2,\ldots,g_m)$ in the path from ${\bf 1}=(1,1,\ldots,1)$ to $R$,
$N(G,{n-r+g}) = q^{m-1}N(\Phi(G),{n-r +g-1})$ where $g = \sum_{i=1}^mg_i$.
The path from ${\bf 1}$ to $R$ has $r-m$ such steps. Therefore,
\begin{equation}
 N(R,n) = (q^{m-1})^{r-m}N({\bf 1},{n-r+m})
\end{equation}
 However, $N({\bf 1},{n-r+m})$ is the number of multisequences in
$\F_q^m$ of dimension $m$, with a given primitive minimal polynomial $p_{n-r+m}(s)$ of degree
$n-r+m$.
Therefore, by Corollary \ref{maincor}, $ N({\bf 1},{n-r+m}) =
(q^{n-r+m}-q)(q^{n-r+m}-q^2)\ldots(q^{n-r+m}-q^{m-1})$. Hence,
\begin{eqnarray*}
 N(R,{n})&=&
(q^{m-1})^{r-m}(q^{n-r+m}-q)(q^{n-r+m}-q^2)\ldots(q^{n-r+m}-q^{m-1})\\
                           &=&  (q^n-q^{r-m+1})(q^n - q^{r-m+2})\ldots(q^n -
q^{r-1})
\end{eqnarray*}
Hence proved.
\end{proof}

\begin{remark}
 $N(R,{n})$ does not depend on the integers $(r_1,r_2,\ldots,r_m)$ but just
their sum.
\end{remark}
One can therefore ask the following question.

\begin{problem}
 Given any $r\geq m$, how many multisequences in $\F_q^r$ having dimension $r$
are $R-$extensions of multisequences in $\F_q^m$ for some $R
=(r_1,r_2,\ldots,r_m)\in \mathbb{Z}_+^m$ where $\sum r_i = r$. 
\end{problem}

This problem is answered in the following lemma.

\begin{lemma}
The number of multisequences in $\F_q^r$ which are $R-$extensions of
multisequences in $\F_q^m$ is given by,
\begin{equation}
 N_r = \binom {r-1}{r-m} (q^n-q^{r-m+1})(q^n - q^{r-m+2})\ldots(q^n -
q^{r-1})
\end{equation}
\end{lemma}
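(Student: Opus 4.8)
The plan is to count the desired multisequences by organizing them according to the composition $R$ that produces them, and to reduce everything to Theorem~\ref{maintheorem}. The set in question is the union, over all admissible $R=(r_1,\ldots,r_m)\in\mathbb{Z}_+^m$ with $\sum r_i=r$, of the images of the extension map $W\mapsto W_R$ restricted to those dimension-$m$ multisequences $W$ in $\F_q^m$ (with the given primitive minimal polynomial $p(s)$ of degree $n$) whose $R$-extension attains dimension $r$. Two quantities therefore drive the count: the number of admissible $R$, and, for each such $R$, the number of valid $W$. First I would handle these two pieces separately, and then argue that their contributions may simply be multiplied.

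For the first piece, an admissible $R$ is exactly a composition of $r$ into $m$ positive parts, i.e. a choice of $r_1,\ldots,r_m\geq 1$ with $r_1+\cdots+r_m=r$. By the standard stars-and-bars count there are $\binom{r-1}{m-1}=\binom{r-1}{r-m}$ such compositions. For the second piece, Theorem~\ref{maintheorem} gives, for any fixed admissible $R$, exactly $(q^n-q^{r-m+1})(q^n-q^{r-m+2})\cdots(q^n-q^{r-1})$ multisequences $W$ with the required property; crucially, by the Remark following that theorem this value depends only on $r=\sum r_i$ and not on the individual parts $r_i$. Multiplying the two pieces yields the claimed expression for $N_r$, so the entire content of the lemma is the assertion that this product counts \emph{distinct} multisequences in $\F_q^r$ — that is, that the extension map is injective in the appropriate sense.

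The hard part, and the step I would scrutinize most carefully, is precisely this injectivity. For a single fixed $R$ it is easy: the component sequences $W_1,\ldots,W_m$ of $W$ reappear as the leading component of each of the $m$ consecutive blocks of $W_R$, so $W$ is recovered from the pair $(W_R,R)$, and since $(\sigma W)_R=\sigma(W_R)$ the map respects the shift equivalence under which multisequences are identified. The genuinely delicate point is \emph{cross-$R$} distinctness: one must rule out that an $R$-extension and an $R'$-extension with $R\neq R'$ ever coincide as shift classes. I would attack this by attempting to read off the composition directly from the multisequence — within each block of $W_R$ the successive component sequences are successive shifts $\sigma^j W_i$ of a single sequence, so the block boundaries, and hence $R$, should be detectable from where this ``successive-shift'' pattern breaks. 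The obstacle is that the pattern need not break cleanly at a boundary when some component sequence $W_{i+1}$ happens itself to be a shift of $W_i$, so making the recovery of $R$ rigorous (or else organizing the union so that any such coincidences are correctly accounted for) is the crux on which correctness of the stated product rests, and is where I would concentrate the remaining effort.
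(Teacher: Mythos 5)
Your proposal is, in its computational content, exactly the paper's own proof: the paper defines $\mathfrak{R}_r=\{(r_1,\ldots,r_m)\in\mathbb{Z}_+^m : \sum r_i=r\}$, computes $|\mathfrak{R}_r|=\binom{r-1}{r-m}$ by the same stars-and-bars count (phrased there via monomials), invokes Theorem~\ref{maintheorem} together with the remark that $N(R,n)$ depends only on $r=\sum r_i$, and then simply writes $N_r=|\mathfrak{R}_r|\cdot N(R,n)$. The one place you go beyond the paper is the ``crux'' you flag and leave open --- whether extensions arising from different compositions $R\neq R'$ can coincide as shift classes in $\F_q^r$ --- and you are right to be suspicious: the paper says nothing about this, and the coincidence you fear does occur. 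With $m=2$, $r=3$ and any component sequence $W_1$ whose primitive minimal polynomial has degree $n\geq 3$, the multisequence $(W_1,\sigma W_1,\sigma^2W_1)$ is simultaneously the $(1,2)$-extension of $(W_1,\sigma W_1)$ and the $(2,1)$-extension of $(W_1,\sigma^2W_1)$, and both source multisequences have dimension $2$ with full-dimensional extensions. Concretely, for $q=2$, $n=3$, $p(s)=s^3+s+1$, each of the two per-$R$ classes has the $4$ elements promised by Theorem~\ref{maintheorem}, but they share exactly the one shift class $(W_1,\sigma W_1,\sigma^2W_1)$, so the union has $7$ elements while the formula gives $\binom{2}{1}\cdot 4=8$.

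So the step you correctly isolated as the remaining work cannot be completed as stated: the product $\binom{r-1}{r-m}\,(q^n-q^{r-m+1})\cdots(q^n-q^{r-1})$ counts pairs $(R,W)$ --- equivalently, extensions with multiplicity over the composition producing them --- and not distinct multisequences in $\F_q^r$. Your within-$R$ injectivity argument (recovering each $W_i$ as the leading component of its block) is fine, and your diagnosis of exactly where the cross-$R$ recovery of $R$ breaks down (when $W_{i+1}$ is itself a shift of $W_i$) pinpoints precisely the source of the overcount. This is a gap in the paper's proof at least as much as in yours; repairing the lemma would require either an inclusion--exclusion over such shift coincidences or restating it as a count of pairs.
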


\begin{proof}
 For any $r \in \mathbb{Z}_+$, define the following subset $\mathfrak{R}_r $ of
$\mathbb{Z}_+^m$. 
\begin{eqnarray*}
\mathfrak{R}_r := \{(r_1,r_2,\ldots,r_m) \in \mathbb{Z}_+^m ~|~ \sum_{i=1}^mr_i
= r\}
\end{eqnarray*}

Therefore, 
\begin{eqnarray*}
 N_r = |\mathfrak{R}_r| \times (q^n-q^{r-m+1})(q^n - q^{r-m+2})\ldots(q^n -
q^{r-1})
\end{eqnarray*}
Corresponding to each element of $\mathfrak{R}_r $, say $(r_1,r_2,\ldots,r_m)$,
we can define a monomial, $x_1^{r_1}x_2^{r_2}\ldots x_m^{r_m}$. Therefore,
calculating  $|\mathfrak{R}_r|$ is equivalent to finding the number of monomials
of degree $r$ where each variable 
is raised to a nonzero index. However, every such monomial can be written as
$x_1x_2\ldots x_m \Upsilon(x_1,x_2,\ldots, x_m)$, where
$\Upsilon(x_1,x_2,\ldots, x_m)$ is a monomial of degree $r-m$. Consequently, the
cardinality of 
$\mathfrak{R}_r$ is equal to the number of monomials of degree $r-m$. This
number is equal to $\binom{(r-m)+m-1}{r-m} = \binom {r-1}{r-m}$. As a result, 
\begin{eqnarray*}
 N_r = \binom {r-1}{r-m}  (q^n-q^{r-m+1})(q^n - q^{r-m+2})\ldots(q^n -
q^{r-1})
\end{eqnarray*}
\end{proof}

Given $R=(r_1,r_2,\ldots,r_m) \in \mathbb{Z}_+^m$, let $r = \sum_{i=1}^mr_i$.
Let $\{p_j(s)\}_{j = n-r+m}^n$ be a series of primitive polynomials where the
index $j$ denotes the degree of the respective polynomial. Let $A_j$s be
their corresponding companion matrices. Let $\Phi(G)$ and $G$ be consecutive
points on the $R-$road. Further, let $c$ be the position of the active
coordinate of $\Phi(G)$. Consider a multisequence $U$ in $\F_q^m$ with a minimal
polynomial $p_{n-r+g-1}(s)$, whose $\Phi(G)-$extension has maximum dimension.
Note that $U$ can be uniquely determined by any of its matrix states. Let $M_U$
be its matrix state whose $c-$th row is $e_{n-r+g-1}^{n-r+g-1}$. The proof of
Theorem \ref{starLemma} gives us a procedure to go from $M_U$, to the matrix
state $M_U^*$ of a multisequence $U^*$ with a primitive minimal polynomial
$p_{n-r+g}(s)$ of degree $n-r+g$, whose $G-$extension has maximum dimension.
Thus, using this procedure one can generate a sequence of matrices
$\{M_j\}_{j=n-r+m}^m$ starting with a matrix $M_{n-r+m} \in \F_q^{m \times
(n-r+m)}$ having full row rank, and culminating in a matrix  $M_n \in \F_q^{m
\times (n)}$. Each matrix $M_j$ in the above sequence uniquely corresponds to a
point $G$ on the $R-$road and can be seen as a matrix state of a multisequence
with minimal polynomial $p_j(s)$ whose corresponding $G-$extension has maximum
dimension. The following is an algorithm to generate this sequence:

% \begin{enumerate}
%  \item If the $c-$th coordinate of $\Phi(G)$ is its active coordinate, consider
% the matrix state $M_U$ of $U$ whose $c$-th row is
% $e_{n-r+g-1}^{n-r+g-1}$.
%  \item For all $i \neq c$, append the $i-$th row $x_i$ of $M_U$ with any
% element $d_i$  of $\F_q$ to give a new vector $x_i^* = [x_i,d_i]$. 
%  \item Construct a matrix $M_{U^*}$ whose $c-$th row is $e_{n-r+g}^{n-r+g}$ and
% whose $i-$th row is $x_i^*$, for $i \neq c$. $U^*$ is the multisequence with a
% matrix state $M_{U^*}$ and minimal polynomial $p_{n-r+g}$.      
% \end{enumerate}  

\begin{algorithm}
\label{algo}
(The variable $M$ is used to store the respective matrix state at every step of
the algorithm. The current point in the path from ${\bf 1}$ to $R$ is stored in
the variable $G = (g_1,g_2,\ldots,g_m)$. The variable $c$ stores the position of
the active coordinate of $G$. The variable $g$ stores the summation of the
values of the coordinates of $G$. ) \\
{\bf Initialization:}
\begin{itemize}
 \item Initialize $G$ to ${\bf 1}$.  
 \item Initialize the value of $g$ to $m$.
 \item Initialize $M$ to any matrix in $\F_q^{m \times (n-r+m)}$ that has full
row rank.  
\end{itemize}
{\bf Main Loop:}
 \begin{itemize}
    \item While $g < r$ do the following
    \begin{itemize}
    \item Find the position of the active coordinate of $G$ and store it
in $c$. 
    \item Find a polynomial $f(s)$ such
that $M(c,:)f(A_{n-r+g}) = e_{n-r+g}^{n-r+g}$.
                       \item $M = M f(A_{n-r+g})$. (This gives us the matrix
state whose $c-$th row is $e_{n-r+g}^{n-r+g}$).
                       \item  For all $i \neq c$, append the $i-$th row of $M$
with any  $d_i \in \F_q$ to get the row vector $(M(i,:),d_i)$. Change the $c-$th
row of $M$ to $e_{n-r+g+1}^{n-r+g+1}$.
                       \item Increment $g$ and $g_c$ by $1$. 
       \end{itemize}
   \end{itemize}
\end{algorithm}
The most complex part of the above algorithm is to find the polynomial $f(s)$.
This can be done using the following subroutine.
\begin{subroutine}
\begin{itemize}
 \item Construct the matrix $\mathscr{M} =
[M(c,:);M(c,:)A_{n-r+g};\ldots;M(c,:)A_{n-r+g}^{n-r+g-1}]$.
 \item Solve the set of linear equations 
\begin{equation}
\label{polycalc}
a\mathscr{M} = e_{n-r+g}^{n-r+g} \,\,\,\, \textrm{ for }\,\, a \in \F_q^{n-r+g}.
\end{equation}
 \item If $a = (a_0,a_1,\ldots,a_{n-r+g-1})$ is the solution to the above set of
equations, $a_0M(c,:) + a_1M(c,:)A_{n-r+g} + \cdots +
a_{n-r+g-1}M(c,:)A_{n-r+g}^{n-r+g-1} = e_{n-r+g}^{n-r+g}$. Therefore $f(s) = a_0
+ a_1s + \cdots + a_{n-r+g-1}s^{n-r+g-1}$.
\end{itemize}
\end{subroutine}

Let $c_1$ and $c_2$ be the active coordinates of ${\bf 1}$
and $\Phi(R)$ respectively. Algorithm \ref{algo} can be thought of as a map from the space of
matrices in $\F_q^{m \times (n-r+m)}$ which have full row rank and whose
$c_1-$th rows are $e_{n-r+m}^{n-r+m}$, to the space of matrices in $ \F_q^{m
\times n}$ which have full row rank and whose $c_2-$th rows are $e_n^n$.  There are precisely $ (q^{n-r+m}-q)(q^{n-r+m}-q^2)\ldots(q^{n-r+m}-q^{m-1})$
matrices in $\F_q^{m \times (n-r+m)}$ whose $c_1-$th row
is $e_{n-r+m}^{n-r+m}$. During each iteration of the while loop one can chose $d_i$s 
in $q^{m-1}$ ways. Therefore, corresponding to each choice of matrix $M_{n-r+m} \in \F_q^{m \times (n-r+m)}$ there are $q^{(m-1)(r-m)}$
possible candidates for $M_n \in \F_q^{m \times n}$. No two distinct choices for the matrix $M_{n-r+m}$ can give the
same $M_n$. %The map therefore is injective. 
Therefore, we have $
(q^n-q^{r-m+1})(q^n - q^{r-m+2})\ldots(q^n -q^{r-1})$ possible matrices which
can occur as an output to Algorithm \ref{algo}. The
number of full row rank matrices in $\F_q^{m \times n} $, whose $c_2-$th row is
$e_n^n$, is however $(q^n - q)(q^{n}-q^2)\ldots(q^{n}-q^{m-1})$.
Out of these matrices, precisely those matrices that occur as matrix
states of multisequences whose $R-$extensions have full rank are the ones that can be obtained from the above algorithm.

We now determine the computational complexity of Algorithm
\ref{algo}. We begin by evaluating the computational complexity of calculating
$Mf(A_{n-r+g})$: 

\begin{itemize}
 \item For any $i$, $M(i,:)A_{n-r+g}^j = (M(i,:)A_{n-r+g}^{j-1})A_{n-r+g}$.
Therefore, knowing $(M(i,:)A_{n-r+g}^{j-1})$, $M(i,:)A_{n-r+g}^j$ can
be calculated in $O(n^2)$ steps. Consequently, the matrix $\mathscr{M}$ can be
generated in $O(n^3)$ operations.
\item Using $LU-$decomposition, solving the set of linear equations
\eqref{polycalc} takes $O(n^3)$ operations.
\item For any $i$, such that $(1\leq i \leq n-r+g)$, the $i-$th
row of $Mf(A_{n-r+g})$ is given by $a_0M(i,:) + a_1M(i,:)A_{n-r+g} + \cdots +
a_{n-r+g-1}M(i,:)A_{n-r+g}^{n-r+g-1} $. As we have already seen each element in
the above summation can be generated in $O(n^2)$ steps. Each row of
$Mf(A_{n-r+g})$ can thus be calculated in $O(n^3)$ steps. Therefore, 
$Mf(A_{n-r+g})$ can be calculated in $O(n^4)$ operations.
\end{itemize}
As we have already seen, the active coordinate of $G$ can be found in $O(m)$
steps. Also, appending  $m-1$ rows of $M$ takes $O(m)$ operations. Further,
incrementing $g$ and $g_c$ has complexity $O(1)$. Therefore each iteration of the
while loop takes $O(m+n^4)$ time.  Since the number of steps from ${\bf 1}$ to
$R$ is $(r-m)$, the while loop runs for a maximum of $r-m$ iterations.
 As a result, {\bf the computational complexity of algorithm \ref{algo} is
$O((r-m)(m+n^4))$}.
 Therefore for a fixed $m$ and $r$ this computational complexity is $O(n^4)$. This is therefore a polynomial time algorithm.    

 We now proceed to see an application of the above developed theory.

\section{Word Based Linear Feedback Shift Registers}
The theory developed in the preceding sections finds an application in word
based Linear Feedback Shift Register (LFSR) design. We begin our discussion by
giving a brief introduction to Linear Feedback Shift
Registers (LFSR)s. 

LFSRs are electronic circuits that implement LRRs. These are widely used
in the field of pseudo-random number generation and coding theory. LFSRs
consist
of delay elements, feedback elements and adders. 
 {\small
 \begin{figure*}[h]
  \label{spiderman}
% %\begin{figure}[thpb]
 \psfrag{a0}{\tiny $a_0$}
 \psfrag{a1}{\tiny $a_1$}
 \psfrag{a2}{\tiny $a_2$}
 \psfrag{ak-3}{\tiny $a_{n-3}$}
 \psfrag{ak-2}{\tiny $a_{n-2}$}
 \psfrag{ak-1}{\tiny $a_{n-1}$}
 \psfrag{D0}{\tiny $D_0$}
 \psfrag{D1}{\tiny $D_1$}
 \psfrag{D2}{\tiny $D_2$}
 \psfrag{Dk-3}{\tiny $D_{n-3}$}
 \psfrag{Dk-2}{\tiny $D_{n-2}$}
 \psfrag{Dk-1}{\tiny $D_{n-1}$}
 \psfrag{op}{\tiny Output}
       \centering
       \includegraphics[scale=0.6]{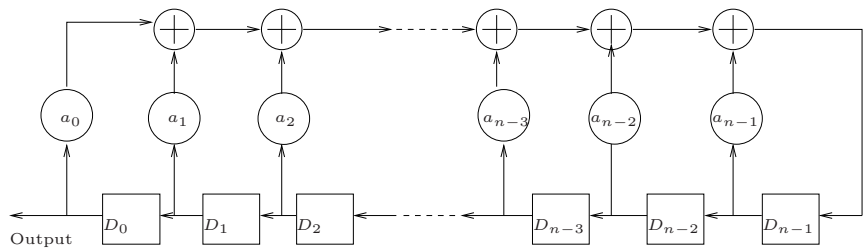}
 \begin{center}
  \caption{Linear Feedback Shift Register}
 \end{center}
  \end{figure*}
 }
For example, the LFSR corresponding to the LRR $S({k+n}) = a_{n-1}S(k+n-1)+
 a_{n-2}S({k+n-2}) + \cdots + a_0S(k)$ is as shown in Figure 1. LFSRs with
primitive characteristic polynomials are of particular interest since they
generate
sequences with desirable randomness properties like  $2$-level
autocorrelation property and span-$n$ property (all nonzero subsequences of
length $n$ occur once in every period)\cite{Golomb}.
An LFSR can be seen as a state machine where the states are the outputs of the
delay blocks. Its state transition matrix is the companion matrix of the
characteristic polynomial of the LRR, (i.e., matrix $A$ in equation \eqref{A}).

Conventional LFSRs use bitwise operations and hence are incapable of 
efficiently utilizing the parallelism provided by word based processors. In the
1994 conference on fast software encryption, a challenge was set
forth to design LFSR's which exploit the parallelism offered by the word
oriented operations of modern processors \cite{Preneel}.  A
special case of this scheme was implemented by Tsaban and Vishne
in their  paper \cite{Tsabman}. Here, they
introduced a family of efficient word oriented LFSRs with multiple input
multiple output delay blocks. 
The design of Tsaban and Vishne was further generalized in \cite{zeng} wherein
the structure shown in figure \ref{f4} was proposed to implement the
mathematical scheme proposed in \cite{Neider2}. 
% The search for configurations resulting in multisequences with primitive
% minimal % polynomials was further
% addressed in \cite{zeng3}. Interestingly, in this design different
% combinations
% of feedback matrices can be used to get the same characteristic polynomial. 
\begin{figure*}[h]
% \begin{figure}[thpb]
\psfrag{a0}{\tiny $B_0$}
\psfrag{a1}{\tiny $B_1$}
%\psfrag{a2}{\tiny $A_2$}
%\psfrag{ak-3}{\tiny $A_{k-3}$}
\psfrag{ak-2}{\tiny $B_{b-2}$}
\psfrag{ak-1}{\tiny $B_{b-1}$}
\psfrag{D0}{\tiny $D_0$}
\psfrag{D1}{\tiny $D_1$}
%\psfrag{D2}{\tiny $D_2$}
%\psfrag{Dk-3}{\tiny $D_{k-3}$}
\psfrag{Dk-2}{\tiny $D_{b-2}$}
\psfrag{Dk-1}{\tiny $D_{b-1}$ }
\psfrag{mbits}{\tiny $m-$bits }
\psfrag{op}{\tiny Output}
     
\begin{center}
\includegraphics[scale=0.6]{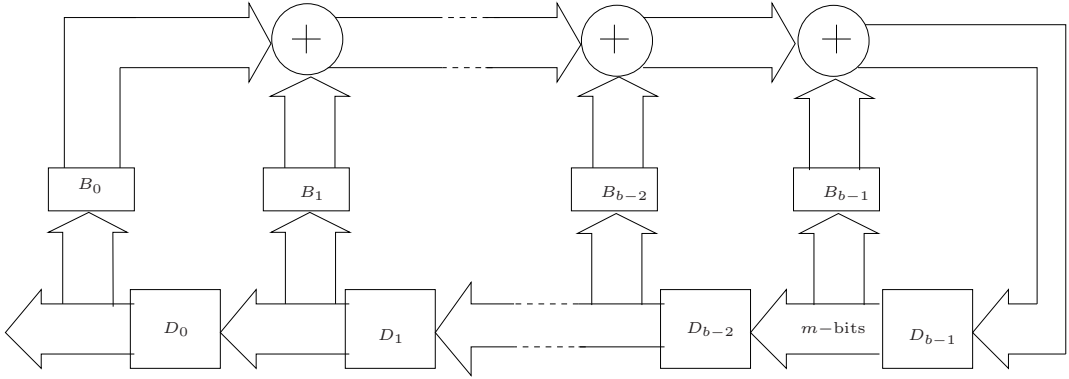}
 \caption{Linear Feedback Shift Register with m-Input m-Output Delay Blocks}
\label{f4}
\end{center}      
% \end{figure}
\end{figure*}

Consider the LFSR shown in Figure \ref{f4} 
Let $W(k)$ be the output
of the LFSR at the
$k$-th time instant. Due to the structure of the LFSR, the following algebraic
relation is satisfied by the vectors generated by it. 
\begin{equation}
\label{mimolrr}
 W({k+b}) = B_0W(k) + B_1W({k+1}) + \cdots + B_{b-1}W({k+b-1})
\end{equation}
where $B_i \in \F_q^{m \times m}$. Therefore, for all $i$,
\begin{eqnarray}
\label{recurrence}
\left[ \begin{matrix} W({k+1})\\W({k+2})\\ \vdots
\\W({k+b}) \end{matrix}\right] =  A_{mb}\left[ \begin{matrix} W(k)\\W({k+1})\\
\vdots \\W({k+b-1})
 \end{matrix}\right] 
\end{eqnarray}
where,
\begin{eqnarray*}
A_{mb} = \left[
\begin{matrix}
 0 & I & 0 & \ldots & 0 \\
 0 & 0 & I & \ldots & 0   \\
\vdots & \vdots & \vdots & \ddots & \vdots  \\
0 & 0 & 0 & \ldots & I  \\
B_0&B_1&B_2& \ldots & B_{b-1}
 \end{matrix}
\right] \in \F_q^{mb \times mb}
\end{eqnarray*}
We henceforth call the structure of the matrix $A_{mb}$ as the $m-$companion
structure. The matrix $A_{mb}$ is called the transition matrix of the LFSR and
it uniquely characterizes the LFSR.  The characteristic
polynomial
of an LFSR is the characteristic polynomial of the respective transition matrix.
As in the scalar case,  LFSRs with primitive characteristic polynomials are
of special interest. Interestingly, in this design, different combinations of
feedback matrices can be used to get the same characteristic polynomial. This
gives rise to the following question:
 \begin{problem}
\label{mainproblem}
 Given integers $n,m$ and a primitive polynomial $p(s)$, of degree $n = mb$, how
many different LFSR realizations, using m-input m-output delay elements, have
$p(s)$ as their characteristic polynomial?
\end{problem}
From the discussion above, it is clear that this
number is equal to the number of m-companion matrices that have the given
primitive polynomial as their characteristic polynomial. 
Therefore, problem \ref{mainproblem} can be restated as follows.
\begin{problem}
  Given integers $n,m$ and a primitive polynomial $p(s)$, of degree $n = mb$,
how many $m-$companion matrices in $\F_q^{n \times n}$ have $p(s)$ as their
characteristic polynomial?
\end{problem}

This question was addressed by us in \cite{HP} where the solution was found for
the cases $m=1$, $m=2$ and $m=n$. We will now demonstrate how the theory
developed earlier in the paper can be used to solve this problem for the
general case and in addition give an algorithm for generating such
configurations.

Consider an LFSR with $m-$input $m-$output delay blocks with
primitive characteristic polynomial $p(s)$ of degree $n=mb$. Let $A_{mb}$ be
the transition matrix of the LFSR.
The output of such an LFSR can be seen as a
multisequence $W= \{W(k)\}_{k \in \mathbb{Z}}$ in $\F_q^m$. 
Let $W_1,W_2,\ldots,W_m$ be the component sequences of $W$. Each component
sequence of $W$ satisfies the LRR corresponding to $p(s)$. Therefore, the
minimal polynomial of $W$ divides $p(s)$. Since $p(s)$ is a primitive
polynomial, the minimal polynomial of $W$ is $p(s)$. The matrix state of $W$
at time instant $k$ is $M(k) =
[W(k),W(k+1),\ldots,W(k+n-1)]$. Let $x_i(k)$ be the state vector of the
component sequence $W_i$ at time instant $k$. Therefore $M(k)$ can also be
written as $[x_1(k);x_2(k);\ldots;x_m(k)]$. Let $R =
\underbrace{(b,b,\ldots,b)}_{m-\textrm{times}}$ and let $W_R$ be the
$R-$extension of $W$. Therefore, the matrix state of $W_R$ at time instant $k$
is $M_R(k) = [x_1(k);\sigma x_1(k);\ldots;\sigma^{b-1}x_1(k);x_2(k);\sigma
x_2(k);\ldots;\sigma^{b-1}x_2(k),\ldots,x_m(k);\sigma
x_1(k);\ldots;\sigma^{b-1}x_m(k)]$. By permuting the rows of $M_R(k)$ we get the
matrix $M_1(k) = [x_1(k),x_2(k),\ldots,x_m(k),\sigma x_1(k),\sigma x_2(k),$ $
\sigma x_m(k), \ldots, \sigma^{b-1}x_1(k),\sigma^{b-1}x_2(k),$
$\ldots,\sigma^{b-1}x_m(k)]$ which can also be written as follows:
\begin{eqnarray*}
 M_1(k) &=& \left[\begin{matrix}
 W(k) & W({k+1}) & \cdots & W({k+mb-1})\\
 W({k+1}) & W({k+2}) & \cdots & W({k+mb})\\
 \vdots & \vdots & \cdots & \vdots\\
 W({k+b-1}) &  W({k+b}) & \cdots & W({k+b+mb-2})
\end{matrix}\right]\\ &=& 
\left[ \begin{matrix} W({k})\\W({k+1})\\ \vdots
\\W({k+b-1}) \end{matrix},A_{mb}\left(\begin{matrix}
W({k})\\W({k+1})\\ \vdots
\\W({k+b-1}) \end{matrix}\right),\cdots, A_{mb}^{n-1}\left(\begin{matrix}
W({k})\\W({k+1})\\
\vdots
\\W({k+b-1}) \end{matrix}\right)\right]
\end{eqnarray*}
Since $A_{mb}$ has a primitive characteristic polynomial, for any non zero
vector $v \in \F_q^n$, the vectors $v,A_{mb}v,\ldots,A_{mb}^{n-1}v$ are
linearly independent. Therefore, $M_1(k)$ has rank $n$. As a consequence,
$M_R(k)$ has rank $n$ i.e., $W_R$ has dimension $n$. We therefore have the
following lemma:

\begin{lemma}
Let $R = \underbrace{(b,b,\ldots,b)}_{m-\textrm{times}}$. Given an LFSR with
$m-$input $m-$output delay blocks having a primitive characteristic polynomial
$p(s)$ of degree $n$, the $R-$extension of any non zero multisequence generated
by it has an dimension $n$ (i.e., maximum dimension).
\end{lemma}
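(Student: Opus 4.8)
The plan is to show that, after a suitable permutation of its rows, a single matrix state of the $R$-extension $W_R$ coincides with a Krylov matrix built from the transition matrix $A_{mb}$, whose full rank is then forced by the primitivity of $p(s)$.

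First I would fix the setup. Since each component sequence of $W$ satisfies the LRR associated with the primitive polynomial $p(s)$, the minimal polynomial of $W$ divides $p(s)$, and because $p(s)$ is primitive (hence irreducible) the minimal polynomial is exactly $p(s)$; thus $W$ has linear complexity $n=mb$ and its matrix states lie in $\F_q^{m\times n}$. By Lemma \ref{span} the dimension of $W_R$ equals the rank of any one matrix state, so it suffices to compute the rank of $M_R(k)$ for a single fixed $k$.

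The central step is to write $M_R(k)$ explicitly for $R=(b,\ldots,b)$, its rows being the vectors $\sigma^j x_i(k)$ for $1\le i\le m$ and $0\le j\le b-1$, and then permute these rows so that they are grouped by the shift index $j$ rather than by the component index $i$; call the result $M_1(k)$. The key identity to verify is that the $j$-th block of rows of $M_1(k)$ is precisely $A_{mb}^{\,j}$ applied, as a linear map on $\F_q^{n}$, to the stacked vector $v=(W(k);W(k+1);\ldots;W(k+b-1))$, so that $M_1(k)=[\,v,\ A_{mb}v,\ \ldots,\ A_{mb}^{n-1}v\,]$. This follows from the recurrence \eqref{recurrence}, which says exactly that stacking $b$ consecutive output vectors and advancing one time step amounts to multiplication by $A_{mb}$. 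The main obstacle will be this bookkeeping: I must match the shift $\sigma$ on the component sequences with the action of $A_{mb}$ on the stacked vector, keeping the index ranges straight so that the $n=mb$ columns really are $v,A_{mb}v,\ldots,A_{mb}^{n-1}v$ and not some truncation.

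Once the identity is established, I would invoke primitivity of $p(s)$: since $p(s)$ is the characteristic polynomial of $A_{mb}$ and is irreducible of degree $n$, it is also the minimal polynomial of $A_{mb}$, so for any nonzero $v$ the vectors $v,A_{mb}v,\ldots,A_{mb}^{n-1}v$ are linearly independent. Finally I would check $v\neq 0$: if the $b$ consecutive vectors $W(k),\ldots,W(k+b-1)$ all vanished, then the recurrence \eqref{mimolrr} would propagate zeros in both directions and force $W\equiv 0$, contradicting that $W$ is nonzero. Hence $M_1(k)$ has rank $n$; row permutation preserves rank, so $M_R(k)$ has rank $n$, and therefore $W_R$ has dimension $n$, the maximum possible.
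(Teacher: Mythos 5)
Your proposal is correct and follows essentially the same route as the paper: permute the rows of a matrix state of $W_R$ so that it becomes the Krylov matrix $[v,\ A_{mb}v,\ \ldots,\ A_{mb}^{n-1}v]$ built on the stacked vector $v=(W(k);\ldots;W(k+b-1))$, then invoke primitivity of $p(s)$ to get linear independence. The only differences are cosmetic: the Krylov vectors appear as the \emph{columns} of the permuted matrix rather than as blocks of rows as you describe at one point, and your explicit verification that $v\neq 0$ is a small point the paper leaves implicit.
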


Further, since $A_{mb}$ has primitive characteristic polynomial, repeated
action of $A_{mb}$ on any nonzero vector $v\in \F_q^n$ will generate all
nonzero vectors in $\F_q^n$. Therefore, by Equation \eqref{recurrence}, for any
initial nonzero state of the LFSR all possible nonzero states of the LFSR will
be covered. As a result, all multisequences generated by the LFSR are just
shifted versions of each other. In other words, each LFSR with a primitive
characteristic polynomial has a unique multisequence associated with it.

Conversely, consider a multisequence $W^*= \{W^*(k)\}_{k \in \mathbb{Z}}$ with
primitive minimal polynomial $p(s)$ of degree $n$, whose $R-$extension has
dimension $n$. Let $A$ be the companion matrix of $p(s)$ and $M_W^*(k)$ be the 
matrix state of $W^*$ at instant $k$. One can construct the following full rank
matrices $M^*(k)$ by permuting the rows of the matrix states of the
$R-$extension of $W^*$.

{\small
 \begin{eqnarray}
\label{consmat}
\left[\begin{matrix} M_W^*(k)\\ M_W^*(k)A\\ \vdots \\
M_W^*(k)A^{b-1} \end{matrix}\right]  &=& \left[\begin{matrix}
 W^*(k) & W^*({k+1}) & \cdots & W^*({k+mb-1})\\
 W^*({k+1}) & W^*({k+2}) & \cdots & W^*({k+mb})\\
 \vdots & \vdots & \cdots & \vdots\\
 W^*({k+b-1}) &  W^*({k+b}) & \cdots & W^*({k+b+mb-2})
\end{matrix}\right] = M^*(k)
 \end{eqnarray}}

Clearly, for any  $k \in \mathbb{Z}$, $M^*(k+1) = M^*(k)A = M^*(k-1)A^2 = \cdots
= M^*(0)A^{k+1}$. Therefore,
$M^*(k+1)M(k)^{-1} = M^*(0)A^{k+1}A^{-k}M^*(0)^{-1} = M^*(0)AM^*(0)^{-1}$. Thus
 $M^*(k+1)M(k)^{-1}$ is independent of $k$ and is a constant matrix for a given
multisequence $W^*$. Let this matrix be denoted by $A_{mb}^*$. Therefore, for
all $k$, 
\begin{equation}
 \label{invrec}
M^*(k+1) = A_{mb}^*M^*(k).
\end{equation}
Further, given any matrix state $M_W^*(k)$ of $W^*$, $A_{mb}^*$ can be
constructed as follows: 
\begin{equation}
 \label{invrec}
A_{mb}^* = M^*(k)AM^*(k)^{-1}.
\end{equation}
where $M^*(k)$ is got from Equation \eqref{consmat}.

It can be easily verified that the matrix $A_{mb}^*$ satisfying equation
\eqref{invrec} has an $m-$canonical structure. Let $A_{mb}^*$ be as follows: 
\begin{eqnarray*}
A_{mb}^* = \left[
\begin{matrix}
 0 & I & 0 & \ldots & 0 \\
 0 & 0 & I & \ldots & 0   \\
\vdots & \vdots & \vdots & \ddots & \vdots  \\
0 & 0 & 0 & \ldots & I  \\
B_0^*&B_1^*&B_2^*& \ldots & B_{b-1}^*
 \end{matrix}
\right] \in \F_q^{mb \times mb}
\end{eqnarray*}
Therefore, for any $k$, $W^*(k+b) = B_0^*W^*(k) + B_1^*W^*(k+1) + \cdots +
B_{b-1}^*W^*(k+b-1)$. Thus we have an LRR (and hence an LFSR) generating the  
the multisequence $W$.

Thus, we have demonstrated a one to one correspondence between LFSRs with
$m-$input $m-$output delay blocks having a given primitive minimal polynomial
$p(s)$ of degree $n=mb$ and multisequences in $\F_q^m$ with minimal polynomial
$p(s)$ whose $R-$extensions have dimension $n$. Therefore, by Theorem
\ref{maintheorem} we have the following:

\begin{theorem}
Number of LFSRs with $m-$input $m-$output delay blocks whose transition
matrices have a given primitive polynomial $p(s)$ of degree $n=mb$, as their
characteristic polynomial is
$(q^{n}-q^{n-1})(q^{n}-q^{n-2})\ldots(q^{n}-q^{n-m+1})$
\end{theorem}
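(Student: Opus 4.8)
The plan is to reduce the count to a single instance of Theorem~\ref{maintheorem} through the correspondence just established, and then read off the claimed formula by a direct substitution. The discussion preceding the statement has already produced a one-to-one correspondence between the objects we wish to count and a class of multisequences: each LFSR with $m$-input $m$-output delay blocks whose transition matrix $A_{mb}$ has primitive characteristic polynomial $p(s)$ of degree $n=mb$ gives rise to a unique multisequence $W$ in $\F_q^m$ with minimal polynomial $p(s)$ whose $R$-extension, for $R=(b,b,\ldots,b)$, attains dimension $n$; conversely, every such multisequence yields, via $A_{mb}^*=M^*(k)AM^*(k)^{-1}$, a unique $m$-companion transition matrix. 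Hence the number of LFSRs we want is exactly the number $N(R,n)$ of such multisequences.

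The remaining steps are purely computational. First I would fix $R=(b,b,\ldots,b)\in\mathbb{Z}_+^m$ and record that $r=\sum_i r_i = mb = n$; this is the one place the hypothesis $n=mb$ enters, and it forces the target dimension $r$ of the $R$-extension to equal the maximal value $n$. Next I would invoke Theorem~\ref{maintheorem}, which counts the multisequences in question as $(q^n-q^{r-m+1})(q^n-q^{r-m+2})\cdots(q^n-q^{r-1})$. Substituting $r=n$ converts the exponents $r-m+1,\ldots,r-1$ into $n-m+1,\ldots,n-1$, so the product becomes $(q^n-q^{n-m+1})(q^n-q^{n-m+2})\cdots(q^n-q^{n-1})$, which is precisely the asserted formula with its factors listed in reverse order. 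By the correspondence this count equals the number of LFSRs, completing the argument.

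Given the bijection, this final step presents no real difficulty, so I do not expect an obstacle here. The genuine content---already discharged in the lead-up---is the establishment of the correspondence itself: in particular, verifying that the reconstructed matrix $A_{mb}^*$ actually has the $m$-companion structure, so that it represents a legitimate LFSR, and that distinct LFSRs yield distinct multisequences up to shift. Were that correspondence not in hand, proving it would be the crux of the whole result; with it available, the theorem is an immediate specialization of Theorem~\ref{maintheorem}.
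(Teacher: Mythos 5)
Your proposal is correct and follows exactly the paper's own route: the paper proves this theorem by citing the one-to-one correspondence established in the preceding discussion between such LFSRs and multisequences in $\F_q^m$ whose $(b,\ldots,b)$-extensions have dimension $n$, and then specializing Theorem \ref{maintheorem} with $r=n=mb$. Your substitution and the observation that the factors appear in reverse order are both accurate.
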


Thus, for ${R}=\underbrace{(b,b,\ldots, b)}_{m \textrm{ times}}$, every
multisequence $W$ with a primitive minimal polynomial $p(s)$, such that $W_{R}$
has dimension $mb$, is generated by a unique LFSR whose transition matrix
has characteristic polynomial $p(s)$. Besides, given a matrix state of
 $W$ one can uniquely determine the transition
matrix $A_{mb}$ of the LFSR by Equation \eqref{invrec}.

Therefore, the problem of finding LFSRs generating multisequences with a given
primitive polynomial reduces to a special case of problem \ref{mainProblem}
where $n = r=
mb$ and ${R}=\underbrace{(b,b,\ldots b)}_{m \textrm{ times}}$. Hence algorithm
\ref{algo} can be used to obtain desired LFSR configurations as demonstrated in
the following example.

\subsection{Example}
We demonstrate Algorithm \ref{algo} by generating a $3-$companion matrix over $\F_2$ with primitive characteristic polynomial $p_6(s) = s^6 +s +1$. We therefore generate a multisequence in $\F_2^3$ whose
$(2,2,2)-$extension has maximum dimension i.e., $6$. Note that here $R = (2,2,2)$. %Note that $n =r$
%implies that at each step of the algorithm, $n-r+z$ is just $z$. 
Consider
\begin{eqnarray*}
 p_5(s) &=& s^5 + s^2 + 1\\
 p_4(s) &=& s^4 + s +1\\
 p_3(s) &=& s^3 + s+1
\end{eqnarray*} which are primitive polynomials over $\F_2$ of corresponding degrees. Let $A_i$s be the companion matrices of the respective $p_i(s)$s, for $3\leq i\leq 6$. 
We start with a multisequence in $\F_q^3$ with minimal polynomial $p_3(s)$ and the following matrix state
\begin{eqnarray*}
 M_3 = \left[\begin{matrix}
  1 & 1 & 1\\
  0 & 1 & 1\\
  0 & 0 & 1 
 \end{matrix}
\right]
\end{eqnarray*}
We initialize $G={\bf
1} = (1,1,1)$
\begin{enumerate}
 \item[Iteration 1:] $G = (1,1,1)$. Therefore, the active coordinate of $G$
is the $3$rd coordinate. (The $3$rd row of $M$ is already $e_3^3$ and hence
$M_3$ is the desired matrix state).  
Let us append the first and second rows of $M_3$ with $1$ and $0$ respectively and
change the third row to $e_4^4$. We therefore get the matrix.
\begin{eqnarray*}
  \left[\begin{matrix}
  1 & 1 & 1 & 1\\
  0 & 1 & 1 & 0\\
  0 & 0 & 0 & 1 
 \end{matrix}
\right]
\end{eqnarray*}
  This is the state matrix of a multisequence $W_4$ with characteristic
polynomial $p_4(s)$. Increment the active coordinate of $G$ to get $G =
(1,1,2)$. It can be verified that $W_4$ is a multisequence whose $(1,1,2)-$extension has dimension $4$.

 \item[Iteration 2:] $G = (1,1,2)$. Therefore, the active coordinate of $G$ is
the $2$nd coordinate. The matrix state of $W_4$ with second row being $e_4^4$
is 
\begin{eqnarray*}
 M_4 =  \left[\begin{matrix}
  1 & 1 & 0 & 1\\
  0 & 0 & 0 & 1\\
  0 & 1 & 1 & 1 
 \end{matrix}
\right]
\end{eqnarray*}
Suppose we append the first and third rows of $M_4$ with $0$ and $1$ respectively and
change the second row to $e_5^5$. This gives the following matrix:
 \begin{eqnarray*}
   \left[\begin{matrix}
  1 & 1 & 0 & 1 & 0\\
  0 & 0 & 0 & 0 & 1\\
  0 & 1 & 1 & 0 & 1 
 \end{matrix}
\right]
\end{eqnarray*}
 This is the state matrix of a multisequence $W_5$ with characteristic
polynomial $p_5(s)$. Increment
the active coordinate of $G$ to get $G =
(1,2,2)$. Note that $W_5$ is a multisequence whose $(1,2,2)-$extension has dimension $5$.

 \item[Iteration 3:] $G = (1,2,2)$. Therefore, the active coordinate of $G$ is
the $1$st coordinate. The matrix state of $W_5$ with first row being $e_5^5$
is 
\begin{eqnarray*}
 M_5 =  \left[\begin{matrix}
  0 & 0 & 0 & 0& 1\\
  0 & 1 & 0 & 1 & 1\\
  0 & 0 & 0 & 1 & 1 
 \end{matrix}
\right]
\end{eqnarray*}
Suppose we append the second and third rows of $M_5$ with $0$ and $1$ respectively and
change the second row to $e_6^6$. This gives the following matrix:
 \begin{eqnarray*}
 M_W =  \left[\begin{matrix}
  0 & 0 & 0 & 0 & 0 & 1\\
  0 & 1 & 0 & 1 & 1 & 0\\
  0 & 0 & 0 & 1 & 1 & 1
 \end{matrix}
\right]
\end{eqnarray*}
 This is the state matrix of a multisequence $W$ with characteristic
polynomial $p_6(s)$. Increment the active coordinate of $G$ to get $G =
(2,2,2)$. Now $W$ is a multisequence whose $(2,2,2)-$extension has maximum
dimension, i.e., $6$.
\end{enumerate}
% 
% We initialize the matrix $M$ in the algorithm with $M_3$. Let $\ell$ keep
% count
% of the {\it while} loop iterations. Within each iteration of the {\it while}
% loop let $c$ keep count of the {\it for} loop iterations. The following table
% describes the step wise running of the algorithm. $M_F$ denotes the value of
% $M$
% at the end of the respective {\it for} loop iteration. (The variable names
% used
% in the table are the same as those in the algorithm).   
% 
% \begin{center}
%  \begin{tabular}{|l|l|l|l|l|l|l|l|}
%   \hline
%  $\ell$ & $c$ & $z$ & $M$ & $\mathscr{M}$ & $f(s)$ & $Mf(A_{z})$& $M_F$\\
% \hline
%  1 & 3 & 3 & 1 1 1 & 0 0 1 & 1 & 1 1 1 & 1 1 1 1 \\
%    &   &   & 0 1 1 & 0 1 0 &   & 0 1 1 & 0 1 1 0 \\
%    &   &   & 0 0 1 & 1 0 1 &   & 0 0 1 & 0 0 0 1 \\
% \hline  
%    & 2 & 4 & 1 1 1 1 & 0 1 1 0 & $1+s+s^2$ & 1 1 0 1 & 1 1 0 1 0 \\
%    &   &   & 0 1 1 0 & 1 1 0 1 &           & 0 0 0 1 & 0 0 0 0 1 \\
%    &   &   & 0 0 0 1 & 1 0 1 0 &           & 0 1 1 1 & 0 1 1 1 1 \\
%    &   &   &         & 0 1 0 1 &           &         &           \\
% \hline
%    & 1 & 5 & 1 1 0 1 0 & 1 1 0 1 0 & $s +s^3$& 0 0 0 0 1 & 0 0 0 0 0 1\\  
%    &   &   & 0 0 0 0 1 & 1 0 1 0 1 &         & 0 1 0 1 1 & 0 1 0 1 1 0\\
%    &   &   & 0 1 1 1 1 & 0 1 0 1 0 &         & 0 0 0 1 1 & 0 0 0 1 1 1\\
%    &   &   &           & 1 0 1 0 0 &         &           &            \\
%    &   &   &           & 0 1 0 0 0 &         &           &            \\
% \hline
%  \end{tabular}
% \end{center}

Using the matrix $M_W$ we can construct the following matrix 
$M^*$: 
\begin{eqnarray*}
 M^* = \left[\begin{matrix}
                  M_W\\
                  M_WA_{6}
                 \end{matrix}\right] = \left[
\begin{matrix}
 0 & 0 & 0 & 0 & 0 & 1\\
 0 & 1 & 0 & 1 & 1 & 0\\
 0 & 0 & 0 & 1 & 1 & 1\\
 0 & 0 & 0 & 0 & 1 & 0\\
 1 & 0 & 1 & 1 & 0 & 1\\
 0 & 0 & 1 & 1 & 1 & 0\\
\end{matrix}\right]
\end{eqnarray*}
The $3-$companion matrix $A_{33}$ can now be obtained as follows:
\begin{eqnarray*}
 A_{33} = M^*A_{6}(M^*)^{-1} = 
\left[\begin{array}{ccc|ccc}
       0 & 0 & 0 & 1 & 0 & 0\\
       0 & 0 & 0 & 0 & 1 & 0\\
       0 & 0 & 0 & 0 & 0 & 1\\
       \hline
       1 & 0 & 1 & 1 & 0 & 0\\
       1 & 1 & 0 & 1 & 0 & 1\\ 
       1 & 1 & 1 & 1 & 0 & 1
      \end{array}\right]
\end{eqnarray*}
This corresponds to an LFSR whose output multisequence will satisfy the
following linear recurring recurring relation: 
\begin{eqnarray*}
W(k+2) = \left[\begin{matrix} 
              1 & 0 & 1\\
              1 & 1 & 0\\
              1 & 1 & 1
             \end{matrix}\right]W(k)+
\left[\begin{matrix} 
              1 & 0 & 0\\
              1 & 0 & 1\\
              1 & 0 & 1
             \end{matrix}\right]W(k+1)
\end{eqnarray*}

 \section{Counting the number of non-singular block Hankel matrices}
Hankel matrices are specially structured matrices which frequently
appear in the fields of signal processing \cite{Hasan}, image
processing and control theory. In this section
we derive a formula for the number of non-singular Hankel matrices of a given
size, over a given finite field $\F_q$, by using the theory developed in the
preceding sections. 

A Hankel matrix is a matrix which is constant along the anti-diagonals. For
example:
\begin{eqnarray}
\label{Hankel}
 H = \left[\begin{matrix}
   a_1 & a_{2} &  \ldots & a_{n-1}& a_n\\
   a_2 & a_3   &  \ldots & a_n & a_{n+1}\\
  \vdots & \vdots & \vdots & \vdots & \vdots \\
   a_{n-1} & a_{n} &  \ldots &  a_{2n-3}& a_{2n-2}\\
   a_n & a_{n+1} &  \ldots &  a_{2n-2}& a_{2n-1}
 \end{matrix}\right]
\end{eqnarray}
 It can be easily seen that the space of $n \times n$ Hankel matrices is a
$2n-1$ dimensional space and each Hankel matrix can be uniquely determined by
the corresponding vector $a_H = (a_1,a_2,\ldots,a_{2n-1}) \in \F_q^{2n-1}$. 
Since the number of Hankel matrices over a finite field $\F_q$ is finite, one
may pose the following question.

\begin{problem}
Given $n$, find the number of Hankel matrices in $\F_q^{n \times n}$ that
have full rank.
\end{problem}

We solve this problem by proving a bijection between the set of full rank
Hankel matrices in $\F_q^{n \times n}$ and the set of multisequences in $\F_q^2$
with a given primitive minimal polynomial $p(s)$ whose $R-$extensions have
maximum dimension, for $R= (n-1,n)$.

\begin{theorem}
Let $p(s)$ be a primitive polynomial of degree $2n-1$. Let $R = (n-1,n)$.
Consider a Hankel matrix $H\in \F_q^{n \times n}$ corresponding to the vector $
a_H = (a_1,a_2,\ldots,a_{2n-1}) \in \F_q^{2n-1}$. The matrix $H$ has full rank
if and only if the matrix $M = [e_{2n-1}^{2n-1}; a_H]$ is a matrix state of a
multisequence $W$ with minimal polynomial $p(s)$, whose $R-$extension has
maximum dimension. 
\end{theorem}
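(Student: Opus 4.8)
The plan is to reduce the biconditional to an exact rank identity for the matrix state of the $R$-extension. Writing $A$ for the companion matrix of $p(s)$ and $N = 2n-1$, I would first observe that, since the first row of $M = [e_{N}^{N}; a_H]$ is nonzero and $p(s)$ is primitive (hence irreducible), using $M$ as an initial matrix state generates a well-defined multisequence $W$ in $\F_q^2$ whose minimal polynomial divides $p(s)$ and is therefore equal to $p(s)$. Thus $M$ is automatically a matrix state of a multisequence with minimal polynomial $p(s)$ for \emph{every} choice of $a_H$, and the only real content of the theorem is the equivalence between ``$H$ has full rank'' and ``the $R$-extension of $W$ has maximum dimension.'' Since $R=(n-1,n)$ gives $r = (n-1)+n = 2n-1 = \deg p$, maximum dimension means exactly that the $(2n-1)\times(2n-1)$ matrix state $M_R$ of the $R$-extension is invertible.

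Next I would write $M_R$ explicitly. With $x_1 = e_N^N$ and $x_2 = a_H$, and using $\sigma^j x_i = x_i A^j$, the rows of $M_R$ are $x_1, x_1 A, \ldots, x_1 A^{n-2}$ followed by $x_2, x_2 A, \ldots, x_2 A^{n-1}$. The heart of the argument is the structural observation about $[x; xA; \ldots; xA^{k-1}]$ already recorded earlier in the paper. Applied to $x_1 = (0,\ldots,0,1)$, the first block $[x_1; x_1 A; \ldots; x_1 A^{n-2}]$ is zero in columns $1,\ldots,n$ and anti-triangular with $1$'s on the anti-diagonal in columns $n+1,\ldots,2n-1$; hence it has rank $n-1$ and its restriction to the last $n-1$ columns is invertible. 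Applied to $x_2 = a_H$, the row $x_2 A^j$ has the form $(a_{j+1}, a_{j+2}, \ldots, a_{2n-1}, *, \ldots, *)$ with the $j$ trailing $*$'s confined to the last $j \le n-1$ columns; consequently its first $n$ entries are exactly $(a_{j+1}, \ldots, a_{j+n})$, which is row $j+1$ of $H$. In other words, the first $n$ columns of the second block of $M_R$ reproduce $H$ exactly.

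I would then finish with a single block row reduction. Because the first block vanishes on columns $1,\ldots,n$ and is invertible on columns $n+1,\ldots,2n-1$, its $n-1$ rows can be used to clear every entry in columns $n+1,\ldots,2n-1$ of the $n$ rows of the second block, and these operations leave columns $1,\ldots,n$ of the second block untouched. After this reduction the second block is zero in its last $n-1$ columns while its first $n$ columns are still $H$, so $\operatorname{rank}(M_R) = (n-1) + \operatorname{rank}(H)$. Therefore $M_R$ is invertible (rank $2n-1$) if and only if $\operatorname{rank}(H) = n$, i.e.\ $H$ has full rank; combined with the first paragraph, this yields the stated equivalence in both directions simultaneously.

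The main obstacle I anticipate is bookkeeping rather than anything conceptual: pinning down precisely the column positions of the $*$ entries in both blocks so that (i) the leading $1$'s of the first block occupy the distinct columns $n+1,\ldots,2n-1$ and (ii) no $*$ of the second block intrudes into the first $n$ columns. Both facts follow directly from the companion-matrix observation, but they must be verified with care, since the identification of the first $n$ columns of the second block with the Hankel matrix $H$ — the single step that actually links Hankel matrices to multisequences — rests entirely on them.
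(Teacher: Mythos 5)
Your proposal is correct and follows essentially the same route as the paper: both write out the matrix state $M_R$ of the $(n-1,n)$-extension, observe that the block of rows generated by $e_{2n-1}^{2n-1}$ vanishes on the first $n$ columns and is invertible (anti-triangular) on the last $n-1$, and that the first $n$ columns of the block generated by $a_H$ reproduce $H$, so that $\operatorname{rank}(M_R)=(n-1)+\operatorname{rank}(H)$. Your version merely spells out the block row reduction and the ``$M$ is automatically a matrix state'' point slightly more explicitly than the paper does.
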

\begin{proof}
Consider the multisequence $W$ with primitive polynomial $p(s)$ and matrix state
$M$ . Therefore the corresponding matrix state $M_R$ of the $R-$extension of $W$
is as follows:

\begin{eqnarray*}
 M_R= \left[
 \begin{array}{ccccc|cccc}
0 & 0 &\ldots&  0 & 0 & 0 &\ldots& 0 & 1\\
0 & 0 &\ldots&  0 & 0 & 0 &\ldots &1 & *\\
\vdots &\vdots  & \vdots & \vdots & \vdots & \vdots & \vdots & \vdots
& \vdots \\
0 & 0 &\ldots&  0 & 0 & 1 &\ldots& * & *\\
\hline
a_1 & a_2 & \ldots & a_{n-1} & a_{n}  & a_{n+1} & \ldots & a_{2n-2}
& a_{2n-1}\\ 
a_2 & a_3 & \ldots & a_{n} & a_{n+1}  & a_{n+2} & \ldots & a_{2n-1}
& *\\ 
\vdots &\vdots & \vdots &  \vdots & \vdots & \vdots & \vdots & \vdots
& \vdots \\ 
a_{n-1} & a_{n} &\ldots &  a_{2n-3} & a_{2n-2} & a_{2n-1} & * & * & *
\\
a_n & a_{n+1} &\ldots &  a_{2n-2} & a_{2n-1} & * & * & * & * 
\end{array}\right]
\end{eqnarray*}
Clearly, the submatrix $M_R(1:n-1,n+1 : 2n-1)$ (the top right
submatrix) has full rank. Therefore, $M_R$ has full rank if and only if the
submatrix $M_R(n:2n-1,1 : n)$ (the bottom left submatrix) has full rank.
However, $M_R(n:2n-1,1 : n)$ is the Hankel matrix $H$. Hence, the matrix $H$
has full rank if and only if the matrix $M_R$ has full rank. In other words, the
matrix $H$ has full rank if and only if the $R-$extension  of the multisequence
$W$ has maximum dimension. Hence proved.
\end{proof}

Every multisequence $W$ in $\F_q^2$, with primitive minimal polynomial $p(s)$
is uniquely characterized by any of its matrix states. Therefore, the number of
full rank Henkel matrices in $\F_q^{n \times n}$
is equal to the number of multisequences in $\F_q^2$ whose $R-$extensions have
maximum dimension. Hence, by Theorem \ref{maintheorem} we have the following
theorem

\begin{theorem}
The number of Hankel matrices in $\F_q^{n \times n}$ having full rank is
$(q^{2n-1}- q^{2n-2})$. 
\end{theorem}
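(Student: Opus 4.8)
The plan is to reduce the count directly to Theorem \ref{maintheorem} by means of the bijection established in the preceding theorem. First I would invoke that theorem: a Hankel matrix $H \in \F_q^{n \times n}$ with data vector $a_H = (a_1,\ldots,a_{2n-1})$ has full rank precisely when $M = [e_{2n-1}^{2n-1}; a_H]$ is a matrix state of a multisequence $W$ in $\F_q^2$ with minimal polynomial $p(s)$ (the fixed primitive polynomial of degree $2n-1$) whose $R$-extension, with $R=(n-1,n)$, attains maximum dimension $r = (n-1)+n = 2n-1$. This already matches $H$ up with one particular matrix state of a qualifying multisequence.

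The key step is to upgrade this statement about individual matrix states into a bijection at the level of multisequences, since two multisequences are identified when they are shifts of one another. Because $p(s)$ is primitive of degree $2n-1$ and $W$ has minimal polynomial $p(s)$, every nonzero component sequence of $W$ runs once through all $q^{2n-1}-1$ nonzero vectors of $\F_q^{2n-1}$ as its state vectors. The maximum-dimension hypothesis forces the first component sequence $W_1$ to be nonzero, for otherwise the first $n-1$ rows of the matrix state $M_R$ of the $R$-extension would vanish and its rank could not reach $2n-1$. Hence $W$ possesses exactly one matrix state whose first row equals $e_{2n-1}^{2n-1}$, namely the one of the form $[e_{2n-1}^{2n-1}; a_H]$, and $a_H$ is recovered as its second row. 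This yields a bijection between the qualifying multisequences and the full rank Hankel matrices, so the two sets have equal cardinality.

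It then remains to count the qualifying multisequences, which is exactly the quantity $N(R,2n-1)$ computed by Theorem \ref{maintheorem}, applied with $m=2$, with degree $2n-1$ playing the role of its $n$, and with $r = \sum r_i = 2n-1$. The product $(q^{n}-q^{r-m+1})\cdots(q^{n}-q^{r-1})$ from that theorem has $m-1 = 1$ factor; substituting the present values gives the single factor $q^{2n-1} - q^{r-1} = q^{2n-1} - q^{2n-2}$, which is the asserted count.

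The only genuinely delicate point is the bijection step: one must confirm that primitivity guarantees a \emph{unique} matrix state with first row $e_{2n-1}^{2n-1}$ and that the maximum-dimension condition rules out the degenerate case $W_1 = 0$, so that each qualifying multisequence is hit exactly once by the correspondence $H \mapsto W$. The final arithmetic collapse of the Theorem \ref{maintheorem} product into a single factor is routine.
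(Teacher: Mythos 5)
Your proposal is correct and follows essentially the same route as the paper: it invokes the preceding theorem's correspondence $H \leftrightarrow M=[e_{2n-1}^{2n-1};a_H]$, passes to a bijection with multisequences in $\F_q^2$ whose $(n-1,n)$-extensions have maximum dimension, and then specializes Theorem \ref{maintheorem} with $m=2$ and $r=2n-1$ to get the single factor $q^{2n-1}-q^{2n-2}$. The only difference is that you spell out the uniqueness of the matrix state with first row $e_{2n-1}^{2n-1}$ (via primitivity and the nonvanishing of $W_1$), a detail the paper compresses into the remark that a multisequence is determined by any one of its matrix states.
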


\section{Conclusions}
In this paper we have introduced the concept of matrix states. Using matrix
states, we have defined the dimension of a multisequence and calculated the
number of multisequences with a given dimension. The concept of $R-$extensions
has then been introduced. We have calculated the number of multisequences
whose
$R-$extensions have maximum dimension. Further we give an algorithm to
generate such multisequences.  We have then demonstrated an application of the
theory developed for $R-$extensions. For any given $m$, we have derived a
formula for the number of LFSR configurations, with $m$ input $m$ output
delay blocks, that generate multisequences with a given primitive minimal
polynomial. Further, we have demonstrated the use of the algorithm developed for
$R-$extensions, for the generation of such LFSR configurations. Finally, using
the theory developed, we have derived a formula for the number of Hankel
matrices in $\F_q^{n \times n}$ that have full rank.
       
 \nopagebreak
\bibliographystyle{ieeetr}
\bibliography{LSFR}

\end{document}